%% file: main.tex
\documentclass[10pt,orivec]{llncs}

\usepackage[T1]{fontenc}
\usepackage{amsmath, amsfonts, amssymb}

\usepackage{amsthm}
\usepackage{physics2}
\usepackage{siunitx}

\AtBeginDocument{\RenewCommandCopy\qty\SI} 

\usepackage{xcolor} 
\usepackage{graphicx}
\graphicspath{{pdf/}}
\usepackage{standalone}

\usepackage{tikz}
\usetikzlibrary{fit, shapes.arrows, positioning, calc, shadows, backgrounds, arrows.meta, decorations.pathreplacing}
\usepackage{quantikz} 

\usepackage{pgfplots}
\usepgfplotslibrary{groupplots}
\pgfplotsset{compat=1.18}

\usepackage{booktabs}
\usepackage{multirow}
\usepackage{array}
\usepackage{diagbox}

\usepackage{subcaption}
\usepackage[ruled, vlined, linesnumbered]{algorithm2e}
\SetKw{KwDownTo}{down to}

\usepackage{pifont}   
\usepackage{cprotect}

\usepackage{longtable}
\usepackage{booktabs}
\usepackage{multirow}

\usepackage[colorlinks]{hyperref}
\usepackage[capitalize]{cleveref}

\newcommand{\cmark}{\textcolor{green!60!black}{\ding{51}}}
           
\newcommand{\dash}{\textemdash}                           

\newcommand{\stSafe}{\textcolor{green!60!black}{\textbf{True}}}
\newcommand{\stLogic}{\textcolor{red}{\texttt{LogicError}}}
\newcommand{\stPhase}{\textcolor{red}{\texttt{PhaseError}}}
\newcommand{\stHybrid}{\textcolor{red}{\texttt{BothError}}}

\newcommand{\stSep}{\textcolor{blue}{Separable}}
\newcommand{\stInsep}{\textcolor{red}{\textbf{Entanglement}}}

\newcommand{\outSafe}{\textcolor{green!60!black}{$\boldsymbol{U_{\mathrm{fix}}}$}}
\newcommand{\outUnsafe}{\textcolor{red}{$\boldsymbol{U_{\mathrm{fix}}}$}}
\newcommand{\A}{\mathcal{A}}
\newcommand{\W}{\mathcal{W}}

\usepackage{hyperref}
\hypersetup{
    colorlinks=true,         
    linkcolor=blue,          
    urlcolor=blue,           
    citecolor=blue           
}

\renewcommand{\orcidID}[1]{\smash{\href{http://orcid.org/#1}{\protect\raisebox{-1.25pt}{\protect\includegraphics{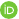}}}}}

\begin{document}

\title{Formal Verification of Quantum Ancilla Safety}

\author{
Jiqi Li\inst{1,2}\orcidID{0009-0009-6550-8087} \and
Jingyi Mei\inst{3}\orcidID{0000-0002-4665-9818} \and
Wang Fang\inst{4}\orcidID{0000-0001-7628-1185} \and
Ji Guan\inst{1}\thanks{Corresponding author.}\orcidID{0000-0002-3490-0029}
}

\institute{
Key Laboratory of System Software (Chinese Academy of Sciences), \\
Institute of Software, Chinese Academy of Sciences, Beijing, China \\
\email{\{lijiqi, guanj\}@ios.ac.cn}
\and
University of Chinese Academy of Sciences, Beijing, China
\and
Leiden University, The Netherlands \\
\email{j.mei@liacs.leidenuniv.nl}
\and
University of Edinburgh, United Kingdom \\
\email{wang.fang@ed.ac.uk}
}

\maketitle   

\pagenumbering{arabic}
\setcounter{page}{1}
\pagestyle{plain}
\begin{abstract}
Ensuring ancilla safety is a critical correctness requirement for quantum compilation, since ancilla qubits are routinely introduced to implement complex operations with fewer gates and reduced depth. However, formally verifying this property is computationally hard due to state-space explosion in the number of qubits, particularly for dirty ancillae, which carry unknown initial states and must be restored after use. We propose an end-to-end verification-and-repair framework that rigorously addresses both clean and dirty ancilla safety. Our core contribution is a two-step reduction strategy: we first prove that verifying an $m$-qubit dirty ancilla register decomposes into $2m$ independent clean ancilla safety checks; subsequently, we reduce each clean ancilla safety instance to an algebraic commutativity check against Pauli-$Z$ and Pauli-$X$ operators. This approach yields an efficient and naturally parallel verifier and enables actionable diagnosis by classifying violations into logic errors and phase errors. Leveraging this diagnosis, we further design lightweight repair routines that append local single-qubit rotations to eliminate a broad class of local ancilla faults. We implement the full pipeline in a prototype tool using a dual-backend architecture combining decision diagrams and weighted model counting, and validate it on diverse circuits ranging from arithmetic benchmarks to Grover’s algorithm. Our experiments demonstrate scalability to thousands of qubits and show that the proposed repairs effectively improve ancilla safety while preserving circuit functionality.
\end{abstract}

\section{Introduction}
In the current Noisy Intermediate-Scale Quantum (NISQ) era, qubit (quantum bit) scarcity remains a major bottleneck for achieving practical quantum advantage~\cite{preskill2025beyond}. 
To reduce space overhead, quantum compilers and circuit designers routinely introduce \emph{ancilla qubits} as temporary workspace, enabling more efficient decompositions of target unitaries and often reducing circuit depth under hardware constraints~\cite{barenco1995elementary,Itoko2020}. Ancillae (ancilla qubits) are broadly classified into two classes: \emph{clean} ancilla qubits and \emph{dirty} ancilla qubits. 
While \emph{clean} ancilla qubits initialized to $\ket{0}$ are the standard setting~\cite{shor1997}, \emph{dirty} ancilla qubits, qubits borrowed from idle parts of a larger computation, carrying unknown (and potentially entangled) states, can further improve space efficiency and flexibility~\cite{Low2024,Haner2017}.

However, exploiting dirty ancilla qubits introduces a stringent correctness requirement.
A circuit is \emph{dirty safe} only if it restores each dirty ancilla to \emph{exactly} its pre-use state and leaves no residual correlation with the working register, \emph{for all} initial system states (including those where the ancilla is initially entangled with the working qubits)~\cite{Su2025}.
This universal quantification makes dirty safety verification substantially harder than standard simulation~\cite{Su2025,Jones_2019}.

Existing approaches fall short of a scalable, general solution.
Simulation and randomized testing provide limited confidence only.
Prior formal methods often rely on sufficient-but-not-necessary structural templates (e.g., compute--uncompute patterns)~\cite{qwire,reqwire}, or depend on state representations with concrete algebraic weights that cannot natively capture the arbitrary, unknown initial states required for dirty ancilla verification~\cite{PVQC}.
Moreover, while dirty safety can be reduced to checking over a small set of basis states~\cite{Su2025}, the developed automated procedure targets classical reversible circuits and does not provide a general reduction principle (nor an end-to-end workflow with repair guidance) for universal quantum circuits.

Beyond verification, compilation pipelines also need \emph{actionable diagnostic feedback}. 
When ancilla safety fails, the failure may either return with the wrong computational-basis value or appear correct on basis states while still carrying a hidden phase error on superposition inputs~\cite{Su2025}, and these faults call for different fixes (such as adding a full uncomputation step~\cite{Paradis2022Reqomp,ParadisBSV2021uncompute} versus inserting a local phase). 
Without a principled diagnosis, developers are left with ad hoc debugging and expensive redesign iterations~\cite{Su2025,ying2024foundations}, and unsafe ancilla qubits cannot be reliably reused across optimizations~\cite{Itoko2020}, directly undermining the qubit-saving benefits that motivated the adoption of dirty ancilla qubits in the first place~\cite{barenco1995elementary,Low2024}.

We address this need~\cite{Su2025} with a fully automated backend-agnostic framework for \emph{verifying and repairing} dirty ancilla safety in unitary quantum circuits.
The core idea is a compact reduction chain that turns the semantic safety specification into efficiently checkable algebraic constraints~\cite{QCQI}.
First, we generalize clean safety to arbitrary target ancilla states and show that dirty safety of an $m$-qubit ancilla register decomposes into $2m$ independent clean-safety obligations on two fixed, linearly independent and non-orthogonal single-qubit states (equivalently, the Pauli-$Z$ and Pauli-$X$ eigenstates).
Second, each clean-safety obligation reduces to a commutativity check, yielding local constraints
$[U,Z_a]=0$ and $[U,X_a]=0$~\cite{QCQI}, where $X_a$ and $Z_a$ are the Pauli matrices on ancilla qubit $a$. 
This transforms a global state-space question into independent operator-level checks, enabling parallel verification and fine-grained diagnosis:
violations are classified as \texttt{LogicError} (leakage in the computational basis or in the $Z$ basis) or \texttt{PhaseError} (coherence/phase faults exposed in the $X$ basis).
Crucially, this algebraic formulation decouples the verification specification from the underlying symbolic engine. 
This backend-agnostic design enables the seamless integration of diverse solvers, such as \textsc{Quokka-Sharp}~\cite{quokka-sharp,Mei2024}, a quantum circuit analysis toolkit based on weighted model counting, and \textsc{QCEC}~\cite{qcec}, a state-of-the-art equivalence checker.
Finally, guided by the resulting diagnostics, we synthesize targeted local repairs.
While repair is inherently limited to separable error cases, it isolates and corrects common local deviations (e.g., residual phases or bit flips), restoring dirty safety using $O(1)$ single-qubit operations.

In summary, our main contributions are:
\begin{enumerate}
    \item \textbf{Unified reductions for dirty and clean ancilla safety.}
    We give an algebraic characterization that reduces ancilla safety of an $m$-qubit register to $2m$ independent commutativity checks, $[U,Z_a]=0$ and $[U,X_a]=0$.

    \item \textbf{Diagnostics and lightweight repair.}
    The checks yield interpretable failure modes (\texttt{LogicError} vs.\ \texttt{PhaseError}), which we use to drive a targeted repair pass that fixes separable violations via local single-qubit corrections.

    \item \textbf{Backend-agnostic implementation and large-scale evaluation.}
    We implement the pipeline with two complementary symbolic backends and evaluate it on structured and random benchmarks (including Grover~\cite{grover1996} and adders~\cite{gidney2018halving}), demonstrating scalability to thousands of qubits and validating end-to-end verify--repair--reverify. 
\end{enumerate}
To maintain the flow of the narrative, we defer the detailed information on examples, omitted proofs, and additional experimental details to the appendix.

\subsection{Related Work}
We briefly review the lines of research most closely connected to ancilla borrowing and dirty ancilla safety, and summarize how our approach fits into this landscape.

\paragraph{Dirty ancilla in compilation and synthesis.}
Reusing dirty ancilla qubits, ancilla qubits initialized in an arbitrary (possibly entangled) state, goes back to Barenco et al.~\cite{barenco1995elementary} and has become a standard technique for space-efficient compilation in the NISQ era~\cite{Preskill2018}.
It has been exploited to optimize arithmetic components underlying Shor-style routines, including adders and modular arithmetic~\cite{Haner2017,gidney2018halving,shor1997}, and more recently to reduce non-Clifford cost (e.g., T-depth) in unitary synthesis~\cite{Low2024}.
These optimizations come with a subtle semantic obligation: borrowed ancilla qubits must be returned to their original state, not merely reset to a fixed basis state, and this requirement can become difficult to track after aggressive rewriting or synthesis~\cite{Su2025}.
Our work treats dirty ancilla safety as an explicit semantic property of a compiled block and provides an automated procedure that applies directly to optimized circuits.

\paragraph{Existing verification frameworks.}
Ensuring ancilla safety has been addressed through diverse programming languages and verification paradigms. 
Deductive approaches based on quantum Hoare logic~\cite{ying2012floyd,ying2024foundations,feng2020hoare} and linear typing disciplines~\cite{qwire,reqwire} enforce disciplined qubit usage through structural patterns like the ``compute--uncompute'' discipline. 
Similarly, languages such as Q\#~\cite{svore2018qsharp} and Silq~\cite{bichsel2020silq} provide syntactic constructs or automated uncomputation to manage temporary qubits. 
While effective when cleanup structures are explicit, these methods struggle with heavily optimized or synthesis-generated blocks where safety properties are no longer syntactically manifest. 
Although Su et al.~\cite{Su2025} recently formalized dirty safety for restricted classical settings, our framework provides a gate-agnostic treatment for universal quantum circuits. 
By reducing safety checking to efficiently checkable algebraic conditions, we enable the certification of arbitrary circuit blocks without assuming a visible compute--uncompute pattern, while additionally supporting diagnostic witness generation.

\paragraph{From global uncomputation to lightweight correction.}
A standard way to ensure ancilla safety is \emph{constructive uncomputation}: append an inverse computation to disentangle and reset auxiliaries.
Repair frameworks such as Reqomp~\cite{Paradis2022Reqomp} and Unqomp~\cite{ParadisBSV2021uncompute} and language-level automation in Silq~\cite{bichsel2020silq} follow this principle and provide strong guarantees, but can incur substantial overhead (often close to doubling depth) and, by design, target \emph{clean} reset semantics.
In contrast, dirty safety requires preserving an arbitrary unknown initial ancilla state.
Guided by the diagnosis, our approach enables a \emph{local} correction strategy: for a useful class of separable faults, safety can be restored via targeted fixes (e.g., correcting residual phase behavior) rather than applying a global uncompute, providing a practical alternative when preserving the original ancilla context is essential.

\section{Preliminaries}
This section provides the notation and basic definitions from quantum computing used throughout this paper. For more details, please refer to~\cite{QCQI}.

\subsection{Quantum State and Circuit}\label{sec:prelim-qstate-circuit}
Computational tasks can be implemented by \emph{quantum circuits}, which consist of a sequence of quantum gates (computational steps) applied to an input quantum state that encodes information. We briefly introduce the relevant notions below.

\paragraph{Quantum state.}

The fundamental unit of quantum information is the \textit{qubit (quantum bit)}, represented as a unit vector in a two-dimensional complex Hilbert (linear) space $\mathcal{H}_2 \cong \mathbb{C}^2$.
We fix the computational basis $\{\ket{0},\ket{1}\}$, where $\ket{0}=(1,0)^\top$ and $\ket{1}=(0,1)^\top$, and $^\top$ denotes vector transpose.
A one-qubit \textit{pure state} $\ket{\psi}\in\mathcal{H}_2$ is a superposition $\ket{\psi}=\alpha\ket{0}+\beta\ket{1}$, where $\alpha,\beta\in\mathbb{C}$ and $|\alpha|^2+|\beta|^2=1$.
More generally, an $n$-qubit pure state lies in the $2^n$-dimensional Hilbert space $\mathcal{H}\cong\mathbb{C}^{2^n}$ and can be expressed as a superposition over the computational basis $
\big\{\ket{i_1}\otimes\ket{i_2}\otimes\cdots\otimes\ket{i_n}\ \big|\ i_j\in\{0,1\}\big\}$
often abbreviated as $\ket{i_1 i_2\cdots i_n}$ or simply $\ket{x}$ with $x\in\{0,1\}^n$, where $\otimes$ denotes the tensor product. To describe the state of a subsystem within such a composite space, we utilize the \textit{reduced density matrix} obtained via the \textit{partial trace} operation $\text{Tr}_B(\cdot)$, which is uniquely defined as the linear map satisfying $\text{Tr}_B(O_A \otimes O_B) = O_A \text{Tr}(O_B)$ for any operators $O_A$ on subsystem $A$ and $O_B$ on subsystem $B$.

\paragraph{Quantum circuits.}
A quantum circuit on an $n$-qubit \emph{register} (i.e., an ordered collection of $n$ qubits, with state space $\mathcal{H}_2^{\otimes n}$) is a finite, time-ordered sequence of quantum gates.
Each quantum gate is a unitary operator, i.e., a complex matrix $G$ satisfying $G^\dagger G = I$, where $G^\dagger$ is the conjugate transpose of $G$ and $I$ is the identity.
A gate may act on only a subset of qubits; formally, its action on the full register is obtained by tensoring with identities on the unaffected qubits (up to the chosen qubit ordering), e.g., $I\otimes\cdots\otimes G\otimes\cdots\otimes I$.
If the circuit applies gates $G_1,G_2,\ldots,G_L$ in time order, then the overall transformation is the unitary
$
U \;=\; G_L \cdots G_2 G_1.
$
Typical 1-qubit quantum gates are Pauli gates (2-by-2 matrices):
\[
X=\begin{bmatrix}0&1\\[2pt]1&0\end{bmatrix},\quad
Y=\begin{bmatrix}0&-i\\[2pt]i&0\end{bmatrix},\quad
Z=\begin{bmatrix}1&0\\[2pt]0&-1\end{bmatrix},\]
where $i$ is the imaginary unit.

Applying the circuit to an input state $\ket{\psi}$ produces the output state $\ket{\psi'} = U\ket{\psi}$.
We refer to $L$ as the \emph{gate count} (the number of gates in the sequence). 
Separately, the \emph{depth} of a circuit is the minimum number of sequential layers after grouping together gates that act on disjoint sets of qubits and can therefore be executed in parallel.

\begin{remark}
In this work, we focus on purely unitary quantum circuits without mid-circuit measurements or classical feed-forward operations. This ensures that the circuit evolution can be represented as a single unitary matrix $U$, which is the foundation for our commutativity-based verification framework.
\end{remark}

\subsection{Ancilla Qubit Safety}\label{sec:ancilla-safety}
In many compiled or hand-designed implementations, the circuit acts on $(n+m)$ qubits rather than only the $n$ ``data'' qubits: the additional $m$ qubits are \emph{ancilla qubits}.
Ancillae serve as temporary workspace introduced to facilitate decompositions of multi-qubit operations, satisfy connectivity constraints, or enable more efficient scheduling.
In practice, they can reduce the circuit depth (and/or the gate count) under hardware constraints.
Accordingly, we model the circuit as a unitary transformation
\[
U:\ \mathcal{H}_\mathcal{W}\otimes\mathcal{H}_\mathcal{A} \;\to\; \mathcal{H}_\mathcal{W}\otimes\mathcal{H}_\mathcal{A},
\]
where $\mathcal{H}_\mathcal{W}$ is the state space of the $n$-qubit working register and $\mathcal{H}_\mathcal{A}$ is the state space of the $m$-qubit ancilla register. The partition into working qubits $\mathcal{W}$ and ancilla qubits $\mathcal{A}$ is conceptual and depends on the intended computation (and the compiler strategy), rather than reflecting a hardware distinction.

However, using ancilla qubits comes with an extra semantic requirement: ancilla qubits are not part of the intended output, so they must not retain any unintended information after the circuit finishes.
Concretely, the circuit should (i) restore the ancilla register to its initial state (or, in the dirty case, to its original unknown state) and (ii) remove any residual entanglement between the working register and the ancilla register, so that the ancilla qubits can be safely reused in subsequent computations.
We call this requirement \emph{ancilla qubit safety}.
In this paper we consider two standard levels of ancilla safety: \textit{clean ancilla safety}, where ancilla qubits are initialized to a known state (typically $\ket{0}^{\otimes m}$), and \textit{dirty ancilla safety}, where ancilla qubits may start in an arbitrary unknown state (possibly entangled with the working qubits or qubits outside of the circuit) and must be returned \emph{exactly} to that state.

\paragraph{Clean ancilla safety.}
Clean ancilla qubits are initialized to the standard basis state $\ket{0}^{\otimes m}$.
A circuit is \emph{clean safe} if, regardless of the input on the working register, the ancilla register is restored to $\ket{0}^{\otimes m}$ at the end.

\begin{definition}\label{def_clean_safety}
Let $U$ be an $(n+m)$-qubit unitary acting on $n$ working qubits $\W$ and $m$ ancilla qubits $\A$.
We say that $U$ is \textbf{clean safe} on $\mathcal{A}$ if and only if
\begin{equation}\label{eq:def_clean}
\forall\,\ket{\psi}_\mathcal{W}\in\mathcal H_\mathcal{W},\ \exists\,\ket{\psi'}_\mathcal{W}\in\mathcal H_\mathcal{W}\ \mathrm{s.t.}\ 
U\big(\ket{\psi}_\mathcal{W}\otimes\ket{0}_\mathcal{A}\big)=\ket{\psi'}_\mathcal{W}\otimes\ket{0}_\mathcal{A} .
\end{equation}
\end{definition}

\cref{fig:clean_circuit} shows a typical ``mediated'' construction: the ancilla qubit $q_a$ transfers information between $q_0$ and $q_1$, and is then uncomputed so that it returns to $\ket{0}$ and decouples from the working qubits, satisfying~\cref{eq:def_clean}.
\paragraph{Dirty ancilla safety.}
Dirty ancilla qubits arise when we \emph{borrow} qubits that are not guaranteed to be in $\ket{0}$ (and may even be entangled with other parts of the computation).
Since these qubits may carry information needed later, we cannot assume any fixed initialization; instead, correctness requires that the circuit acts as the identity on the ancilla register and applies some unitary on the working register, uniformly for \emph{all} ancilla states.

\begin{figure}[t]
    \centering
    \begin{subfigure}[t]{0.45\textwidth}
        \centering
        \begin{quantikz}[row sep=3mm]
        \lstick{$q_0$} &\ctrl{2} & \qw      & \ctrl{2} & \qw \\
        \lstick{$q_1$} &\qw      & \targ{}  & \qw      & \qw \\
        \lstick{$q_a$} &\targ{}  & \ctrl{-1}& \targ{}  & \qw
        \end{quantikz}
        \caption{}
        \label{fig:clean_circuit}
    \end{subfigure}\hfil
    \begin{subfigure}[t]{0.45\textwidth}
        \centering
        \begin{quantikz}[row sep=3mm]
        \lstick{$q_0$}    & \ctrl{2} & \qw      & \ctrl{2} & \qw      & \qw \\
        \lstick{$q_1$}    & \qw      & \targ{}  & \qw      & \targ{}  & \qw \\
        \lstick{$q_a$}    & \targ{}  & \ctrl{-1}& \targ{}  & \ctrl{-1}& \qw
        \end{quantikz}
        \caption{}
        \label{fig:dirty_circuit}
    \end{subfigure}
    \caption{Bridge-GHZ circuits~\cite{Itoko2020} with a shared ancilla $q_a$: (left) clean-ancilla version; (right) dirty-ancilla version. See detail in Appendix~\ref{subsec:ghz_bridge}.}
    \label{fig:bridge-ghz-clean-dirty}
\end{figure}
\begin{definition}\label{def_dirty_safety}
Let $U$ be an $(n+m)$-qubit unitary acting on $n$ working qubits $W$ and $m$ ancilla qubits $\A$.
We say that $U$ is \textbf{dirty safe} on $\A$ if and only if there exists a unitary $V$ on $\mathcal H_\mathcal{W}$ such that
\begin{equation}\label{eq:def_dirty}
\forall\,\ket{\psi}_\mathcal{W}\in\mathcal H_\mathcal{W},\ \forall\,\ket{\varphi}_\mathcal{A}\in\mathcal H_\mathcal{A}:\quad
U\big(\ket{\psi}_\mathcal{W}\otimes\ket{\varphi}_\mathcal{A}\big)=(V\ket{\psi}_\mathcal{W})\otimes\ket{\varphi}_\mathcal{A} .
\end{equation}
\end{definition}

Intuitively, \cref{eq:def_dirty} means that the effect of $U$ on the working register is independent of the (unknown) ancilla state, and the ancilla register is preserved exactly.
\cref{fig:dirty_circuit} illustrates this: the ancilla $q_\mathcal{A}$ may start in any state, but after the full sequence the working qubits implement the intended transformation while $q_\mathcal{A}$ is returned unchanged, hence the circuit is dirty safe.

\begin{remark}[Ancilla qubits at the boundary]\label{rem:boundary}
\cref{def_clean_safety,def_dirty_safety} assume (for notational convenience) that the ancilla register $\A$ forms a suffix of the qubit ordering.
This is without loss of generality: if $\A$ is interleaved with working qubits, we can apply a permutation (swap) unitary $\Pi_\mathcal{A}$ that moves the qubits in $\A$ to the end, and instead consider the conjugated circuit $U'=\Pi_\mathcal{A} U \Pi_\mathcal{A}^\dagger$.
Ancilla safety is invariant under such re-indexing, so we assume throughout that ancilla qubits are placed at the end.
\end{remark}

\section{Verification of Clean Ancilla Safety}\label{sec:clean}
In this section, we reduce the verification of \emph{clean ancilla safety} to checking a purely algebraic condition: the invariance of a designated ``clean'' subspace, which in turn is equivalent to a commutativity test between the circuit unitary and a fixed reflection operator.
This yields a practical verification procedure that can be discharged by standard matrix-equivalence (or commutativity) engines.
To maintain the flow of the narrative, we defer all proofs to the appendix~\ref{app:proof}.

\emph{Problem reduction outline.} We reduce clean ancilla safety checking to a single algebraic check.
The reduction proceeds in three steps: 
(i) we show that clean ancilla safety is equivalent to invariance of the clean-input subspace under the circuit unitary; 
(ii) we express subspace invariance via the corresponding orthogonal projector (equivalently, its \emph{Householder reflection}); and 
(iii) we obtain a commutativity obligation that can be discharged by standard equivalence-checking backends.

\paragraph{Setup.}
Let $U$ be an $(n+m)$-qubit unitary acting on $\mathcal{H}=\mathcal{H}_{\W}\otimes\mathcal{H}_{\A}$, where $\mathcal{H}_\W$ corresponds to $n$ working qubits and $\mathcal{H}_\A$ corresponds to $m$ ancilla qubits.
Define the \emph{clean subspace}
\[
\mathcal{V} \ :=\ \mathcal{H}_\W \otimes \mathrm{span}\{\ket{0}_\A\},
\]
where $\operatorname{span}\{\cdot\}$ denotes the linear span. 
Clean safety requires that every valid input $\ket{\psi}_\W\ket{0}_\A$ is mapped to an output whose ancilla register is still $\ket{0}_\A$ and is disentangled from the working register.
Geometrically, this means exactly that $U$ maps the clean subspace $\mathcal{V}$ onto itself.

\begin{lemma}[Clean safety $\Leftrightarrow$ clean-subspace invariance]\label{lem:clean-subspace}
Let $U$ and $\mathcal{V}$ be as above.  $U$ is clean safe on $\A$ if and only if $U$ preserves the clean subspace: $U(\mathcal{V})=\mathcal{V}$, where $
U(\mathcal{V}) \ :=\ \{\,U\ket{v} \mid \ket{v}\in \mathcal{V}\,\}
$
denotes the image of $\mathcal{V}$ under $U$.
   
\end{lemma}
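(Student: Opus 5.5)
We prove the two implications separately. Throughout we use the elementary description $\mathcal{V}=\{\ket{\psi}_\W\otimes\ket{0}_\A \mid \ket{\psi}_\W\in\mathcal{H}_\W\}$. This holds because every element of $\mathcal{H}_\W\otimes\operatorname{span}\{\ket{0}_\A\}$ is a sum $\sum_k c_k\ket{w_k}\otimes\ket{0}_\A=(\sum_k c_k\ket{w_k})\otimes\ket{0}_\A$, so bilinearity of $\otimes$ collapses it to a single product with $\ket{0}_\A$.

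($\Rightarrow$) Assume $U$ is clean safe in the sense of \cref{def_clean_safety}. Take any $\ket{v}\in\mathcal{V}$ and write it as $\ket{\psi}_\W\otimes\ket{0}_\A$. Clean safety gives $U\ket{v}=\ket{\psi'}_\W\otimes\ket{0}_\A\in\mathcal{V}$, so $U(\mathcal{V})\subseteq\mathcal{V}$. To upgrade this inclusion to equality we use finite dimension. Since $U$ is unitary it is injective, so $U(\mathcal{V})$ is a linear subspace with $\dim U(\mathcal{V})=\dim\mathcal{V}=2^n$. A subspace contained in $\mathcal{V}$ with the same finite dimension equals $\mathcal{V}$, hence $U(\mathcal{V})=\mathcal{V}$.

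($\Leftarrow$) Assume $U(\mathcal{V})=\mathcal{V}$ and fix $\ket{\psi}_\W$. Then $U(\ket{\psi}_\W\otimes\ket{0}_\A)\in\mathcal{V}$. By the description of $\mathcal{V}$ above, this vector has the form $\ket{\psi'}_\W\otimes\ket{0}_\A$ for some $\ket{\psi'}_\W$, which is exactly \cref{eq:def_clean}. If one also wants $\ket{\psi'}_\W$ to be a normalized state, this follows because $U$ is norm-preserving and $\|\ket{0}_\A\|=1$.

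The only step that needs any care is the dimension argument turning $U(\mathcal{V})\subseteq\mathcal{V}$ into equality. It relies on injectivity of $U$ and on $\mathcal{H}$ being finite-dimensional, and both hold here. Everything else is unfolding the definitions.
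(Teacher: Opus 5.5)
Your proof is correct and follows the paper's argument essentially step for step. You derive $U(\mathcal{V})\subseteq\mathcal{V}$ from the definition and upgrade it to equality using injectivity of $U$ and finite dimension; for the converse you read off the product form $\ket{\psi'}_\W\otimes\ket{0}_\A$ of vectors in $\mathcal{V}$, and you justify that product form explicitly via bilinearity, where the paper only asserts it.
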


\noindent
Having reduced clean ancilla safety to subspace invariance, we next translate $U(\mathcal{V})=\mathcal{V}$ into an operator identity that can be checked automatically using projectors and reflections.

\paragraph{Projector and reflection.}
Let $P_{\mathcal{V}}$ be the orthogonal projector onto $\mathcal{V}$.
Since $\mathcal{V}=\mathcal{H}_\W\otimes \mathrm{span}\{\ket{0}_\A\}$, we have
\[
P_{\mathcal{V}}= I_\W\otimes \ket{0}\!\bra{0}_\A,
\qquad
P_{\mathcal{V}}^2=P_{\mathcal{V}},
\qquad
P_{\mathcal{V}}^\dagger=P_{\mathcal{V}},
\]
where $I_\W$ is the identity operator onto the working qubit space $\mathcal{H}_\W$. 
We further define the associated Householder reflection~\cite{horn1985matrix}:
\[
Q_{\mathcal{V}}\ :=\ 2P_{\mathcal{V}}-I_{\W\cup \A},
\]
which is unitary and Hermitian ($Q_{\mathcal{V}}=Q_{\mathcal{V}}^\dagger$).
In particular, one can view $Q_{\mathcal{V}}$ as a fixed diagonal reflection that flips the phase of states outside $\mathcal{V}$ (and acts as $+1$ on $\mathcal{V}$).

\begin{lemma}[Invariance $\Leftrightarrow$ commutativity]\label{lemma:clean-commute} Let $U$, $\mathcal{V}$ and  $Q_{\mathcal{V}}$ be as above.
$U(\mathcal{V})=\mathcal{V}$ if and only if

$[U,Q_{\mathcal{V}}]=0,$ 
where $[U,Q_{\mathcal{V}}]\coloneqq UQ_{\mathcal{V}}-Q_{\mathcal{V}}U,$ the commutator of the two matrices.
\end{lemma}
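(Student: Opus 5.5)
Since $Q_{\mathcal{V}}=2P_{\mathcal{V}}-I$, the commutator $[U,Q_{\mathcal{V}}]$ equals $2[U,P_{\mathcal{V}}]$. It therefore suffices to show that $U(\mathcal{V})=\mathcal{V}$ holds if and only if $UP_{\mathcal{V}}=P_{\mathcal{V}}U$. Both directions rest on finite dimensionality and unitarity, which let us pass between invariance of $\mathcal{V}$ and invariance of its orthogonal complement $\mathcal{V}^\perp$.

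For ($\Leftarrow$), assume $UP_{\mathcal{V}}=P_{\mathcal{V}}U$ and take $\ket{v}\in\mathcal{V}$. Then $U\ket{v}=UP_{\mathcal{V}}\ket{v}=P_{\mathcal{V}}U\ket{v}\in\mathcal{V}$, so $U(\mathcal{V})\subseteq\mathcal{V}$. Since $U$ is injective and $\mathcal{V}$ is finite dimensional, $\dim U(\mathcal{V})=\dim\mathcal{V}$, which forces $U(\mathcal{V})=\mathcal{V}$.

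For ($\Rightarrow$), assume $U(\mathcal{V})=\mathcal{V}$. First I will show that $\mathcal{V}^\perp$ is also invariant. Take $\ket{w}\in\mathcal{V}^\perp$ and $\ket{v}\in\mathcal{V}$. Because $U(\mathcal{V})=\mathcal{V}$, we can write $\ket{v}=U\ket{v'}$ with $\ket{v'}\in\mathcal{V}$. Then $\langle v|Uw\rangle=\langle Uv'|Uw\rangle=\langle v'|w\rangle=0$ by unitarity, so $U\ket{w}\in\mathcal{V}^\perp$.

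Now decompose an arbitrary $\ket{x}=P_{\mathcal{V}}\ket{x}+(I-P_{\mathcal{V}})\ket{x}$. Applying $U$ gives $U\ket{x}=UP_{\mathcal{V}}\ket{x}+U(I-P_{\mathcal{V}})\ket{x}$, where the first term lies in $\mathcal{V}$ and the second in $\mathcal{V}^\perp$. Projecting onto $\mathcal{V}$ yields $P_{\mathcal{V}}U\ket{x}=UP_{\mathcal{V}}\ket{x}$. Hence $[U,P_{\mathcal{V}}]=0$, and therefore $[U,Q_{\mathcal{V}}]=0$.

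The only step needing care is the invariance of $\mathcal{V}^\perp$. It uses surjectivity onto $\mathcal{V}$, which is why the hypothesis is stated as the equality $U(\mathcal{V})=\mathcal{V}$ rather than mere inclusion. It also uses the fact that $U$ preserves inner products. Everything else is routine linear algebra.
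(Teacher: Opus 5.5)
Your reduction to $[U,P_{\mathcal{V}}]=0$ and your ($\Leftarrow$) direction are the same as the paper's. The ($\Rightarrow$) direction takes a different route.

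The paper sets $R=UP_{\mathcal{V}}U^\dagger$ and notes that it is again an orthogonal projector: it is Hermitian, and idempotent by unitarity. Its range is $U(\mathcal{V})=\mathcal{V}$, so uniqueness of the orthogonal projector onto a subspace gives $R=P_{\mathcal{V}}$. You instead show directly that $\mathcal{V}^\perp$ is invariant, then split an arbitrary vector along $\mathcal{V}\oplus\mathcal{V}^\perp$ and project.

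Both arguments are correct and rest on the same fact: an orthogonal projector is determined by its range because its kernel must be the orthogonal complement. The paper uses this as a uniqueness statement. Your inner-product computation checks by hand that $U$ maps the kernel $\mathcal{V}^\perp$ of $P_{\mathcal{V}}$ into itself.

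What each buys:
\begin{itemize}
\item The paper's version is shorter and lands directly on $UP_{\mathcal{V}}U^\dagger=P_{\mathcal{V}}$, equivalently $UQ_{\mathcal{V}}U^\dagger=Q_{\mathcal{V}}$. That is exactly the form the backends check.
\item Your version is more elementary, since it needs no uniqueness lemma. Its intermediate step, invariance of $\mathcal{V}^\perp$, is for a single-qubit ancilla essentially the content of \cref{lem:clean-orth}, because there $\mathcal{V}_\varphi^\perp=\mathcal{V}_{\varphi^\perp}$.
\end{itemize}

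One minor point about your closing remark. You say surjectivity is why the hypothesis is an equality rather than an inclusion. That is harmless but does not matter here: in finite dimension, $U(\mathcal{V})\subseteq\mathcal{V}$ already becomes equality by the same dimension count you use in ($\Leftarrow$).
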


 Combining this result with \cref{lem:clean-subspace}, we reduce clean ancilla safety checking to a purely algebraic check (commutativity check).
\begin{theorem}[Commutativity reduction for clean ancilla safety]\label{thm:clean-reduction}
$U$ is clean safe on $\A$ if and only if $[U,Q_{\mathcal{V}}]=0$.
\end{theorem}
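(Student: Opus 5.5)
The theorem follows by chaining \cref{lem:clean-subspace} and \cref{lemma:clean-commute}: the first gives clean safety $\Leftrightarrow U(\mathcal{V})=\mathcal{V}$, the second gives $U(\mathcal{V})=\mathcal{V}\Leftrightarrow [U,Q_{\mathcal{V}}]=0$, and transitivity of ``if and only if'' closes the argument. Since both lemmas are stated earlier (with proofs deferred), the proof of the theorem itself is a two-line composition. To make the argument self-contained, I would also sketch why each lemma holds.

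For \cref{lem:clean-subspace}, the forward direction takes $\ket{v}=\ket{\psi}_\W\ket{0}_\A\in\mathcal{V}$. Clean safety gives $U\ket{v}=\ket{\psi'}_\W\ket{0}_\A\in\mathcal{V}$, so $U(\mathcal{V})\subseteq\mathcal{V}$; every vector of $\mathcal{V}$ has this product form, since $\mathcal{V}=\mathcal{H}_\W\otimes\operatorname{span}\{\ket{0}\}$. Because $U$ is injective and $\mathcal{V}$ is finite-dimensional, $\dim U(\mathcal{V})=\dim\mathcal{V}$, and hence $U(\mathcal{V})=\mathcal{V}$. For the converse, $U(\ket{\psi}\ket{0})\in\mathcal{V}$ means it can be written as $\ket{\psi'}\ket{0}$, with $\ket{\psi'}=(I\otimes\bra{0})U(\ket{\psi}\ket{0})$.

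For \cref{lemma:clean-commute}, note that $[U,Q_{\mathcal{V}}]=0$ is equivalent to $[U,P_{\mathcal{V}}]=0$, because $Q_{\mathcal{V}}=2P_{\mathcal{V}}-I$. If $UP=PU$, then $U(\mathcal{V})=UP(\mathcal{H})=PU(\mathcal{H})\subseteq\mathcal{V}$, and equality follows by the dimension argument. Conversely, if $U(\mathcal{V})=\mathcal{V}$, unitarity gives $U(\mathcal{V}^\perp)=\mathcal{V}^\perp$: for $w\perp\mathcal{V}$ and $v\in\mathcal{V}$, write $v=Uv'$ with $v'\in\mathcal{V}$, so $\langle Uw,v\rangle=\langle w,v'\rangle=0$. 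Thus $U$ is block diagonal with respect to $\mathcal{V}\oplus\mathcal{V}^\perp$ and commutes with $P_{\mathcal{V}}$.

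The only step that needs care is this last one: upgrading inclusion of $\mathcal{V}$ to invariance of $\mathcal{V}^\perp$. It rests on unitarity and finite dimension, and that is precisely where the argument would break for a merely invertible or non-normal $U$.
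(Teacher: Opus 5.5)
Your proposal is correct and matches the paper's proof, which also obtains the theorem by chaining \cref{lem:clean-subspace} and \cref{lemma:clean-commute} through transitivity of the two equivalences. Your lemma sketches are sound as well: the one small variation is that you derive $[U,P_{\mathcal{V}}]=0$ from invariance of $\mathcal{V}^\perp$, whereas the paper shows that $UP_{\mathcal{V}}U^\dagger$ is an orthogonal projector with image $\mathcal{V}$ and then invokes uniqueness of orthogonal projectors.
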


\paragraph{Algorithmic consequence.}
By \cref{thm:clean-reduction}, verifying clean ancilla safety reduces to a commutativity check between the circuit unitary $U$ and the fixed reflection $Q_{\mathcal{V}}$ determined by the clean subspace.
Equivalently, one may check the projector invariance condition $UQ_{\mathcal{V}}U^\dagger=Q_{\mathcal{V}}$ by applying $U^\dagger$ on the left side of equation $[U,Q_{\mathcal{V}}]=0$; both are standard matrix-equivalence obligations.

We end this section by checking the clean ancilla safety of the circuit in \cref{fig:clean_circuit} using \cref{thm:clean-reduction}. 
\begin{example}\label{exa:clean-example}
We illustrate the reduction on the 3-gate circuit in \cref{fig:clean_circuit}, which implements a CNOT between working qubits $q_0,q_1$ mediated by an ancilla $q_a$.
The clean subspace is $\mathcal{V}=\mathcal{H}_{\{q_0,q_1\}}\otimes \mathrm{span}\{\ket{0}_{q_a}\}$, with projector  and reflection:
\[
P_{\mathcal{V}} = I_{q_0}\otimes I_{q_1}\otimes \ket{0}\!\bra{0}_{q_a}, \qquad Q_{\mathcal{V}} = 2P_{\mathcal{V}} - I = I_{q_0}\otimes I_{q_1}\otimes Z_{q_a}.
\]

\cref{thm:clean-reduction} states that the circuit is clean safe if and only if $[U,Q_{\mathcal{V}}]=0$ (equivalently, $UQ_{\mathcal{V}}U^\dagger=Q_{\mathcal{V}}$). We can conclude the clean ancilla safety of the circuit by verifying that the condition  $[U,Q_{\mathcal{V}}]=0$ holds. For the full calculation, please see the appendix~\ref{app:detail-example}.
\end{example}

\paragraph{Extension to arbitrary target ancilla states.}
For presentation, the clean ancilla safety development above uses the standard initialization $\ket{0}_\A$ and the corresponding clean subspace
$\mathcal{V}_0=\mathcal{H}_\W\otimes\mathrm{span}\{\ket{0}_\A\}$.
The same reduction applies verbatim to \emph{any} fixed target ancilla state $\ket{\varphi}_\A$.
Indeed, letting
\[
\mathcal{V}_{\varphi}:=\mathcal{H}_\W\otimes\mathrm{span}\{\ket{\varphi}_\A\},
\qquad
P_{\mathcal{V}_{\varphi}}:= I_\W\otimes \ket{\varphi}\!\bra{\varphi}_\A,
\qquad
Q_{\mathcal{V}_{\varphi}}:=2P_{\mathcal{V}_{\varphi}}-I_{\W\cup \A},
\]
we obtain the analogous commutativity characterization:
\begin{theorem}
\label{thm:clean-reduction-gen}
$U \ \text{is clean safe for }\ket{\varphi}_\A$ if and only if 
\[
[U,Q_{\mathcal{V}_{\varphi}}]=0, 
\quad Q_{\mathcal{V}_{\varphi}} = 2(I_\W \otimes \ket{\varphi}\!\bra{\varphi}_\A) - I.\]
\end{theorem}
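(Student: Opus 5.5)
The plan is to rerun the two-step argument behind \cref{thm:clean-reduction} with $\ket{0}_\A$ replaced by a fixed unit vector $\ket{\varphi}_\A$. Here ``clean safe for $\ket{\varphi}_\A$'' means: for every $\ket{\psi}_\W$ there is $\ket{\psi'}_\W$ with $U(\ket{\psi}_\W\otimes\ket{\varphi}_\A)=\ket{\psi'}_\W\otimes\ket{\varphi}_\A$. The cleanest route is to reduce to the already-proved case by a change of basis on the ancilla. Choose a unitary $W_\A$ on $\mathcal{H}_\A$ with $W_\A\ket{0}_\A=\ket{\varphi}_\A$; this is possible by completing $\ket{\varphi}_\A$ to an orthonormal basis. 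Then set $\tilde U := (I_\W\otimes W_\A)^\dagger\, U\, (I_\W\otimes W_\A)$.

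\emph{Step 1: safety transfers.} Since $I_\W\otimes W_\A$ acts trivially on $\W$, we have
\[
U(\ket{\psi}\otimes\ket{\varphi})=\ket{\psi'}\otimes\ket{\varphi}
\quad\Longleftrightarrow\quad
\tilde U(\ket{\psi}\otimes\ket{0})=\ket{\psi'}\otimes\ket{0}.
\]
Hence $U$ is clean safe for $\ket{\varphi}_\A$ if and only if $\tilde U$ is clean safe in the sense of \cref{def_clean_safety}.

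\emph{Step 2: the reflection transfers.} We have $(I_\W\otimes W_\A)(I_\W\otimes\ket{0}\!\bra{0})(I_\W\otimes W_\A)^\dagger = I_\W\otimes\ket{\varphi}\!\bra{\varphi}$. It follows that $Q_{\mathcal{V}_\varphi}=(I_\W\otimes W_\A)\,Q_{\mathcal{V}_0}\,(I_\W\otimes W_\A)^\dagger$. Conjugating by a unitary preserves vanishing of commutators, so
\[
[U,Q_{\mathcal{V}_\varphi}]=0 \quad\Longleftrightarrow\quad [\tilde U,Q_{\mathcal{V}_0}]=0.
\]

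\emph{Step 3: combine.} Applying \cref{thm:clean-reduction} to $\tilde U$ closes the chain of equivalences.

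Alternatively, the proof can be done directly by repeating the proofs of \cref{lem:clean-subspace} and \cref{lemma:clean-commute} with $\mathcal{V}_\varphi$ in place of $\mathcal{V}$. The only place where something beyond bookkeeping happens is in this direct version. There one must show that $U(\mathcal{V}_\varphi)\subseteq\mathcal{V}_\varphi$ already forces equality, because $U$ is injective and $\mathcal{V}_\varphi$ is finite-dimensional. One must also check that invariance of $\mathcal{V}_\varphi$ implies invariance of $\mathcal{V}_\varphi^\perp$, which follows from unitarity since $U^\dagger=U^{-1}$ preserves $\mathcal{V}_\varphi$ as well. From these two facts $U$ commutes with $P_{\mathcal{V}_\varphi}$, hence with $Q_{\mathcal{V}_\varphi}$.

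I expect the main obstacle to be only this subtle point: one-sided inclusion versus equality and the passage to the orthogonal complement. The conjugation approach sidesteps it entirely by reusing \cref{thm:clean-reduction}, so that is the route I would write up.
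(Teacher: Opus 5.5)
Your proof is correct, but your main route differs from the paper's. The paper does not reduce to \cref{thm:clean-reduction}. It reruns the two-step chain with $\mathcal{V}_\varphi$ in place of $\mathcal{V}$:
\begin{itemize}
\item clean safety for $\ket{\varphi}_\A$ gives $U(\mathcal{V}_\varphi)\subseteq\mathcal{V}_\varphi$, hence equality by unitarity and finite dimension;
\item invariance is identified with $UP_{\mathcal{V}_\varphi}U^\dagger=P_{\mathcal{V}_\varphi}$, i.e.\ $[U,P_{\mathcal{V}_\varphi}]=0$;
\item this in turn is identified with $[U,Q_{\mathcal{V}_\varphi}]=0$, since $Q_{\mathcal{V}_\varphi}=2P_{\mathcal{V}_\varphi}-I$.
\end{itemize}
That is essentially your ``alternative'' paragraph.

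Your primary proof instead uses covariance under a local change of frame on the ancilla. With $W_\A\ket{0}_\A=\ket{\varphi}_\A$ and $\tilde U=(I_\W\otimes W_\A)^\dagger U(I_\W\otimes W_\A)$, both the safety condition and the commutator condition transfer exactly. Your two computations are right: $\tilde U(\ket{\psi}\otimes\ket{0})=(I_\W\otimes W_\A^\dagger)U(\ket{\psi}\otimes\ket{\varphi})$, and $[U,Q_{\mathcal{V}_\varphi}]=(I_\W\otimes W_\A)[\tilde U,Q_{\mathcal{V}_0}](I_\W\otimes W_\A)^\dagger$.

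\textbf{What your route buys.} All the analytic content is inherited from the already-proved \cref{lem:clean-subspace,lemma:clean-commute}. That content is: inclusion forcing equality, uniqueness of the orthogonal projector onto the invariant subspace, and the converse from commutativity back to invariance. The paper's direct rerun asserts the step from invariance to $UP_{\mathcal{V}_\varphi}U^\dagger=P_{\mathcal{V}_\varphi}$ without argument and leaves the reverse implications implicit. Your reduction also explains why the later test reflections for $\ket{0}$ and $\ket{+}$ come out as $Z_a$ and $X_a$: they are Hadamard conjugates of one another.

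\textbf{What the paper's route buys.} It is self-contained and needs no basis completion to construct $W_\A$.
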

Therefore, our clean ancilla safety checking procedure for $\ket{0}_\A$ is directly generalized to arbitrary target states $\ket{\varphi}_\A$ by replacing the projector and reflector $(P_{\mathcal{V}_0},Q_{\mathcal{V}_0})$ with $(P_{\mathcal{V}_{\varphi}},Q_{\mathcal{V}_{\varphi}})$.
This generality will be used in the dirty ancilla safety reduction in \cref{sec:dirty}, where we verify clean ancilla safety for multiple (linearly independent and non-orthogonal) ancilla states.

\section{Verification of Dirty Ancilla Safety}\label{sec:dirty}
In this section, we reduce \emph{dirty ancilla safety} check to a small number of \emph{clean ancilla safety} checks, and thus ultimately to the commutativity or equivalence obligations developed in \cref{sec:clean}.
The reduction has two layers: we first show that dirty ancilla safety on an $m$-qubit ancilla register is equivalent to dirty ancilla safety on each individual ancilla qubit; we then show that dirty ancilla safety for a \emph{single} ancilla qubit can be certified by checking state-dependent clean ancilla safety for two fixed, linearly independent and non-orthogonal ancilla states, equivalently two states satisfying $0<\left| \braket{\varphi_1}{\varphi_2} \right|< 1$ (e.g., $\ket{0}$ and $\ket{+}$).

\subsection{From Multi-qubit to Single-qubit Dirty Safety}\label{sec:dirty-multi-to-single}
We start with an important distinction between \emph{clean} and \emph{dirty} ancilla safety.
For clean ancilla qubits, safety is defined relative to a fixed initialization (typically $\ket{0}^{\otimes m}$), and the property is \emph{not} closed under marginalizing to individual qubits: a circuit may be clean safe for the whole register $\ket{0}^{\otimes m}$ while failing to be clean safe for a chosen ancilla qubit when the remaining qubits are treated as part of the working register.

In contrast, dirty ancilla safety is a stronger, basis-independent statement: the circuit must act as the identity on the ancilla register for \emph{all} ancilla states (including entangled ones).
As a result, dirty ancilla safety \emph{does} decompose cleanly across individual ancilla qubits.

\begin{lemma}[Multi-qubit dirty ancilla safety $\Leftrightarrow$ single-qubit dirty ancilla safety]\label{lem:dirty-multi-single}
Let $U$ be a unitary acting on working qubits $\W$ and ancilla qubits $\A=\{a_1,\dots,a_m\}$.
Then $U$ is dirty safe on the ancilla register $\A$ (\cref{def_dirty_safety}) if and only if, for every $a_i\in \A$, $U$ is dirty safe on $\{a_i\}$ when all other qubits are treated as working qubits, i.e.,
\[
\forall a_i\in \A:\ \exists\,V_i\ \text{on}\ \mathcal{H}_{\W\cup(\A\setminus\{a_i\})}\ \mathrm{s.t.}\ 
U = V_i \otimes I_{a_i}.
\]
\end{lemma}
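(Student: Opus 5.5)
First I will restate \cref{def_dirty_safety} as an operator identity: $U$ is dirty safe on $\A$ if and only if $U = V\otimes I_\A$ for some unitary $V$ on $\mathcal H_\W$. The direction ($\Leftarrow$) is immediate. For ($\Rightarrow$), apply \cref{eq:def_dirty} to product inputs $\ket{\psi}_\W\otimes\ket{\varphi}_\A$; since these span $\mathcal H_\W\otimes\mathcal H_\A$ and both sides are linear in the input, we get $U = V\otimes I_\A$. The quantifier over product inputs is therefore already enough to capture the case where the ancilla is entangled with the working register. With this reformulation, the lemma becomes a statement about tensor factorizations of operators. Throughout, I use \cref{rem:boundary} to move the qubit under discussion to the end of the ordering.

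For the direction ($\Rightarrow$), suppose $U = V\otimes I_{a_1}\otimes\cdots\otimes I_{a_m}$. For a fixed $a_i$, set $V_i := V\otimes \bigotimes_{j\neq i} I_{a_j}$, reordering the qubits so that $a_i$ is last. Then $U = V_i\otimes I_{a_i}$, and $V_i$ is unitary. This direction is routine.

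For the direction ($\Leftarrow$), I will argue through commutants rather than by manipulating tensor factors directly. An operator $U$ has the form $V_i\otimes I_{a_i}$ exactly when $U$ commutes with every operator $I\otimes M_{a_i}$, where $M$ ranges over $2\times 2$ matrices. Equivalently, $U$ commutes with $X_{a_i}$ and $Z_{a_i}$, since these generate the full matrix algebra on $a_i$. To prove this standard fact, I write $U=\sum_{k,l} U_{kl}\otimes \ket{k}\!\bra{l}_{a_i}$ and impose commutation with $\ket{k}\!\bra{l}_{a_i}$. This forces $U_{kl}=\delta_{kl}U_{00}$.

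Applying the characterization for each $i$ gives that $U$ commutes with the algebra generated by all $M_{a_i}$, $i=1,\dots,m$. This algebra is all of $I_\W\otimes \mathcal L(\mathcal H_\A)$, because tensor products of single-qubit operators span the operators on the full ancilla register. The same block-decomposition argument, now over a basis $\ket{x}\!\bra{y}_\A$ with $x,y\in\{0,1\}^m$, yields $U=V\otimes I_\A$. Finally, $V$ is unitary because $V^\dagger V\otimes I_\A=U^\dagger U=I$.

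The main obstacle I expect is the backward direction: showing that the separate factorizations $U=V_i\otimes I_{a_i}$ combine into a single factorization. A naive induction that peels off one ancilla at a time requires checking that $V_1$ itself acts trivially on $a_2$. That follows from $U=V_1\otimes I_{a_1}=V_2\otimes I_{a_2}$ by taking the partial trace over $a_1$, which gives $V_1 = \tfrac12\mathrm{Tr}_{a_1}(V_2\otimes I_{a_2})$, and the right-hand side is of the form $(\cdot)\otimes I_{a_2}$. I will present the commutant argument as the main proof and mention the partial-trace induction as an alternative.
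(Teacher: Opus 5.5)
Your proof is correct. It follows the paper in the forward direction, but it uses a different and more complete argument in the backward direction.

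\textbf{What matches.} In the forward direction you do what the paper does. The paper also uses your reformulation of \cref{def_dirty_safety} as $U=V\otimes I_\A$, but only implicitly (``by definition''); your linearity and spanning remark makes that step explicit.

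\textbf{Where the routes diverge.} For the backward direction, the paper only asserts that because $U=V_j\otimes I_{a_j}$ for each $j$, ``$U$ acts as the identity on every ancilla qubit independently,'' and hence $U=V\otimes I_\A$. It never justifies why the $m$ separate factorizations combine into one, which is exactly the obstacle you identified. Your commutant argument fills this gap:
\begin{itemize}
\item each factorization is equivalent to $U$ commuting with every $I\otimes M_{a_i}$;
\item products of these operators span $I_\W\otimes\mathcal{L}(\mathcal{H}_\A)$;
\item the block decomposition over $\ket{x}\!\bra{y}_\A$ then forces $U=V\otimes I_\A$.
\end{itemize}
Your partial-trace induction is an equally valid, more hands-on alternative.

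\textbf{What your route buys.} Beyond rigor, it connects this lemma to the rest of the paper. Your characterization ``$U=V_i\otimes I_{a_i}$ iff $[U,X_{a_i}]=[U,Z_{a_i}]=0$'' is already the criterion of \cref{thm:dirty-reduction} for the test pair $\ket{0},\ket{+}$. More generally, suppose $0<|\braket{\varphi_1}{\varphi_2}|<1$. Then the reflections $Q_{(a_i,\ket{\varphi_1})}$ and $Q_{(a_i,\ket{\varphi_2})}$ are Pauli operators along non-collinear Bloch axes, so they generate the full $2\times 2$ matrix algebra. The commutant view therefore proves that theorem directly, without going through \cref{lem:clean-orth,lem:dirty-two-states}. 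The paper's version is shorter but leaves this step as an unproved intuition.
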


\cref{lem:dirty-multi-single} implies that verifying dirty ancilla safety on an $m$-qubit ancilla register reduces to $m$ independent single-qubit dirty ancilla safety checks.

\subsection{From Single-qubit Dirty Ancilla Safety to Clean Ancilla Safety}\label{sec:dirty-single-to-clean}
We now focus on the case of a \emph{single} ancilla qubit $\A$.
Write $\mathcal{H}=\mathcal{H}_\W\otimes\mathcal{H}_\A$ with $\dim(\mathcal{H}_\A)=2$.
For any single-qubit state $\ket{\varphi}\in\mathcal{H}_\A$, define the corresponding ``clean'' subspace
\[
\mathcal{V}_{\varphi} := \mathcal{H}_\W\otimes \mathrm{span}\{\ket{\varphi}\},
\]
and recall that ``$U$ is clean safe for $\ket{\varphi}$'' means $U(\mathcal{V}_\varphi)=\mathcal{V}_\varphi$ (equivalently, $\forall\ket{\psi}_\W,\exists\ket{\psi'}_\W:\ U(\ket{\psi}_\W\ket{\varphi})=\ket{\psi'}_\W\ket{\varphi}$).

\paragraph{Orthogonal complement closure.}
For a single-qubit ancilla, let 
$\mathrm{span}\{\ket{\varphi}\}^{\perp}\subseteq \mathcal{H}_\A$
denote the one-dimensional subspace orthogonal to 
$\mathrm{span}\{\ket{\varphi}\}$.
Choose any unit vector $\ket{\varphi^\perp}$ spanning this subspace, i.e.,
$\braket{\varphi}{\varphi^\perp}=0$ and 
$\|\ket{\varphi^\perp}\|=1$.
Then the orthogonal complement of 
$\mathcal{V}_\varphi=\mathcal{H}_\W\otimes \mathrm{span}\{\ket{\varphi}\}$
is
\[
\mathcal{V}_\varphi^{\perp}
=
\mathcal{H}_\W\otimes \mathrm{span}\{\ket{\varphi^\perp}\}.
\]
For a single-qubit ancilla, invariance of $\mathcal{V}_\varphi$
therefore automatically implies invariance of $\mathcal{V}_{\varphi^\perp}$,
and vice versa.

\begin{lemma}[Closure under orthogonal complement]\label{lem:clean-orth}
Let $U$ act on $\mathcal{H}_\W\otimes\mathcal{H}_\A$ with a single ancilla qubit $\A$.
For any unit vector $\ket{\varphi}\in\mathcal{H}_\A$, choose any unit vector
$\ket{\varphi^\perp}\in\mathcal{H}_\A$ spanning
$\mathrm{span}\{\ket{\varphi}\}^{\perp}$.
Then
\[
U(\mathcal{V}_{\varphi})=\mathcal{V}_{\varphi}
\quad\text{if and only if}\quad
U(\mathcal{V}_{\varphi^\perp})=\mathcal{V}_{\varphi^\perp}.
\]
Equivalently, $U$ is clean safe for $\ket{\varphi}$ if and only if it is clean safe for $\ket{\varphi^\perp}$.
\end{lemma}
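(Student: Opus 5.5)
The plan is to show that a unitary leaves a subspace invariant exactly when it leaves that subspace's orthogonal complement invariant, and then to identify the complement of $\mathcal{V}_\varphi$ with $\mathcal{V}_{\varphi^\perp}$. The identification is already recorded above the statement. Since $\{\ket{\varphi},\ket{\varphi^\perp}\}$ is an orthonormal basis of $\mathcal{H}_\A\cong\mathbb{C}^2$, we have
\[
I_\W\otimes\ket{\varphi}\!\bra{\varphi}+I_\W\otimes\ket{\varphi^\perp}\!\bra{\varphi^\perp}=I_{\W\cup\A},
\]
that is, $P_{\mathcal{V}_\varphi}+P_{\mathcal{V}_{\varphi^\perp}}=I$. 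Hence $\mathcal{V}_\varphi^\perp=\mathcal{V}_{\varphi^\perp}$. This is the only place where the single-qubit hypothesis enters. In higher ancilla dimension the complement would be $\mathcal{H}_\W\otimes\mathrm{span}\{\ket{\varphi}\}^\perp$, which is not of the form $\mathcal{V}_{\varphi'}$ for a single state $\ket{\varphi'}$.

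Next comes the core linear-algebra step. Suppose $U(\mathcal{V})=\mathcal{V}$ for a subspace $\mathcal{V}$ of the finite-dimensional space $\mathcal{H}$. Let $\ket{w}\in\mathcal{V}^\perp$ and $\ket{v}\in\mathcal{V}$. Because $U$ maps $\mathcal{V}$ onto $\mathcal{V}$, we can write $\ket{v}=U\ket{v'}$ with $\ket{v'}\in\mathcal{V}$. Then
\[
\braket{v}{Uw}=\bra{v'}U^\dagger U\ket{w}=\braket{v'}{w}=0,
\]
so $U(\mathcal{V}^\perp)\subseteq\mathcal{V}^\perp$. Since $U$ is injective and the space is finite-dimensional, this inclusion upgrades to equality.

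An alternative route avoids computing with vectors. By \cref{lemma:clean-commute}, $U(\mathcal{V})=\mathcal{V}$ is equivalent to $[U,Q_{\mathcal{V}}]=0$. From $P_{\mathcal{V}^\perp}=I-P_{\mathcal{V}}$ we get $Q_{\mathcal{V}^\perp}=-Q_{\mathcal{V}}$, and the commutator condition is invariant under this sign change. Either version gives both directions at once: applying the argument to $\ket{\varphi^\perp}$ recovers $\ket{\varphi}$ up to a global phase, and a global phase does not change $\mathcal{V}_\varphi$.

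Finally, \cref{lem:clean-subspace}, in the $\ket{\varphi}$-generalized form used for \cref{thm:clean-reduction-gen}, converts subspace invariance back into clean safety for $\ket{\varphi}$ and for $\ket{\varphi^\perp}$. This gives the "equivalently" clause of the statement.

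I expect no real obstacle. The only care needed is to state explicitly that $\mathcal{V}_\varphi^\perp=\mathcal{V}_{\varphi^\perp}$ relies on $\dim\mathcal{H}_\A=2$, and that the result does not depend on which unit vector $\ket{\varphi^\perp}$ is chosen.
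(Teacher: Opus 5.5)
Your proposal is correct and follows the route the paper indicates. The paper gives no separate appendix proof for this lemma, only the remark before it that $\mathcal{V}_\varphi^{\perp}=\mathcal{H}_\W\otimes\mathrm{span}\{\ket{\varphi^\perp}\}$ for a single-qubit ancilla, so that invariance under $U$ transfers to the complement. You fill in what that remark leaves implicit: the identity $P_{\mathcal{V}_\varphi}+P_{\mathcal{V}_{\varphi^\perp}}=I$ and the fact that a unitary preserving a subspace also preserves its orthogonal complement. Your alternative argument via $Q_{\mathcal{V}^\perp}=-Q_{\mathcal{V}}$ is an equally valid one-line variant.
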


\paragraph{Two linearly independent and non-orthogonal states suffice.}
Dirty safety on a single ancilla qubit is the statement that $U$ acts trivially on the ancilla for \emph{all} ancilla states, i.e., $U=V\otimes I_\A$ for some unitary $V$ on the working register.
The following theorem shows that it suffices to test clean ancilla safety on two linearly independent and non-orthogonal ancilla states (which uniquely determine the action on the Bloch sphere).

\begin{lemma}[Single-qubit dirty ancilla safety via two clean checks]\label{lem:dirty-two-states}
Let $U$ be a unitary on $\mathcal{H}_\W\otimes\mathcal{H}_\A$ with a single ancilla qubit $\A$.
Fix any two linearly independent and non-orthogonal ancilla states $\ket{\varphi_1},\ket{\varphi_2}\in\mathcal{H}_\A$ with $0<\left| \braket{\varphi_1}{\varphi_2} \right|< 1$.
Then $U$ is dirty safe on $\A$ if and only if $U$ is clean safe for both $\ket{\varphi_1}$ and $\ket{\varphi_2}$, i.e.,
\[
U(\mathcal{V}_{\varphi_j})=\mathcal{V}_{\varphi_j},
\quad j\in \{1, 2\}.
\]
In particular, one may choose $\{\ket{\varphi_1},\ket{\varphi_2}\}=\{\ket{0},\ket+:=\tfrac{1}{\sqrt{2}}(\ket{0}+\ket{1})\}$.
\end{lemma}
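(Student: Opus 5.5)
The forward direction is immediate: if $U=V\otimes I_\A$, then for any $\ket{\varphi}$ we have $U(\ket{\psi}_\W\ket{\varphi})=(V\ket{\psi}_\W)\ket{\varphi}$. Hence $U(\mathcal{V}_\varphi)\subseteq\mathcal{V}_\varphi$, and equality holds because $U$ is injective and $\mathcal{V}_\varphi$ is finite-dimensional. This applies in particular to $\ket{\varphi_1}$ and $\ket{\varphi_2}$.

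For the converse, I will work with block decompositions. By \cref{lem:clean-orth}, clean safety for $\ket{\varphi_1}$ also gives invariance of $\mathcal{V}_{\varphi_1^\perp}$. So in the orthonormal ancilla basis $\{\ket{\varphi_1},\ket{\varphi_1^\perp}\}$, $U$ is block diagonal:
\[
U = A\otimes\ket{\varphi_1}\!\bra{\varphi_1} + B\otimes\ket{\varphi_1^\perp}\!\bra{\varphi_1^\perp}.
\]
Here $A,B$ are operators on $\mathcal{H}_\W$, and both are unitary because $U$ is unitary on each invariant subspace. Equivalently, $U$ commutes with $I_\W\otimes\ket{\varphi_1}\!\bra{\varphi_1}$, which is the content of \cref{thm:clean-reduction-gen}. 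It then suffices to show $A=B$, since then $U=A\otimes I_\A$ and $U$ is dirty safe.

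To get $A=B$, write $\ket{\varphi_2}=\alpha\ket{\varphi_1}+\beta\ket{\varphi_1^\perp}$. The hypothesis $0<|\braket{\varphi_1}{\varphi_2}|<1$ gives both $\alpha\neq0$ and $\beta\neq0$. Fix any $\ket{\psi}_\W$. Clean safety for $\ket{\varphi_2}$ gives some $\ket{\psi'}$ with
\[
\alpha A\ket{\psi}\ket{\varphi_1}+\beta B\ket{\psi}\ket{\varphi_1^\perp}
= U(\ket{\psi}\ket{\varphi_2})
= \alpha\ket{\psi'}\ket{\varphi_1}+\beta\ket{\psi'}\ket{\varphi_1^\perp}.
\]
Comparing components in the orthonormal ancilla basis and dividing by the nonzero $\alpha$ and $\beta$ yields $A\ket{\psi}=\ket{\psi'}=B\ket{\psi}$. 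Since $\ket{\psi}$ was arbitrary, $A=B$. The particular choice $\{\ket0,\ket+\}$ satisfies the hypothesis because $|\braket{0}{+}|=1/\sqrt2$.

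I do not expect a real obstacle. The only delicate point is noting where each half of the hypothesis enters. Non-orthogonality ($\alpha\neq0$) and linear independence ($\beta\neq0$) are both needed to extract $A\ket{\psi}$ and $B\ket{\psi}$ from the two components. Without either, $U=I\otimes Z$-type counterexamples survive: for instance, $U=I_\W\otimes Z$ is clean safe for both $\ket0$ and $\ket1$ but not dirty safe. I will remark on this to justify the condition $0<|\braket{\varphi_1}{\varphi_2}|<1$.
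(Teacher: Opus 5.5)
Your proof is correct and follows essentially the same route as the paper, whose argument appears inside the ($\Leftarrow$) direction of the proof of \cref{thm:dirty-reduction}. There, too, $U$ is block-diagonalized with respect to $\{\ket{\varphi_1},\ket{\varphi_1^\perp}\}$, $\ket{\varphi_2}=\alpha\ket{\varphi_1}+\beta\ket{\varphi_1^\perp}$ is expanded with $\alpha,\beta\neq0$, and components are matched to get equal diagonal blocks. Your version is slightly more careful: you justify the block-diagonal form via \cref{lem:clean-orth} (the paper only asserts it), and you spell out the forward direction and why both halves of $0<|\braket{\varphi_1}{\varphi_2}|<1$ are needed.
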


\begin{theorem}[Commutativity reduction for dirty ancilla safety]\label{thm:dirty-reduction}
Let $U$ be a unitary acting on $\mathcal{H}=\mathcal{H}_\W\otimes\mathcal{H}_\A$, where
$\W$ is the working register and $\A=\{a_1,\dots,a_m\}$ is an $m$-qubit ancilla register.
Fix two linearly independent and non-orthogonal single-qubit states $\ket{\varphi_1},\ket{\varphi_2}\in\mathbb{C}^2$
with $0<\left| \braket{\varphi_1}{\varphi_2} \right|< 1$ (e.g., $\ket{\varphi_1}=\ket{0}$ and $\ket{\varphi_2}=\tfrac{1}{\sqrt{2}}(\ket{0}+\ket{1})$).
For each $a_i\in \A$ and $j\in\{1,2\}$, define
\[
P_{(a_i,\ket\varphi_j)}
\ :=\
I_{\W\cup(\A\setminus\{a_i\})}\otimes \ket{\varphi_j}\!\bra{\varphi_j}_{a_i},
\qquad
Q_{(a_i,\ket\varphi_j)}
\ :=\
2P_{(a_i,\ket\varphi_j)}-I.
\]
Then $U$ is dirty safe on $\A$ if and only if
\[
\forall\,a_i\in \A, j \in \{1,2\}:\quad
[U,\,Q_{(a_i,\ket\varphi_j)}]=0.
\]
Equivalently, $U$ is dirty safe on $\A$ if and only if
\[
\forall\,a_i\in \A, j \in \{1,2\}:\quad
U\,Q_{(a_i,\ket\varphi_j)}\,U^\dagger = Q_{(a_i,\ket\varphi_j)}.
\]
\end{theorem}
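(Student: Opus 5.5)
The plan is to chain the three reductions already established: \cref{lem:dirty-multi-single}, then \cref{lem:dirty-two-states}, then \cref{thm:clean-reduction-gen}. Each is stated as an equivalence, so composing them gives both directions at once.

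\textbf{Step 1 (register to single qubits).} By \cref{lem:dirty-multi-single}, $U$ is dirty safe on $\A$ if and only if, for every $a_i\in\A$, $U$ is dirty safe on the single ancilla $\{a_i\}$. Here the remaining qubits $\W\cup(\A\setminus\{a_i\})$ are treated as the working register.

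\textbf{Step 2 (single qubit to two clean checks).} For each fixed $a_i$, apply \cref{lem:dirty-two-states} with working register $\W_i:=\W\cup(\A\setminus\{a_i\})$ and ancilla $a_i$. This says dirty safety on $\{a_i\}$ is equivalent to clean safety for both $\ket{\varphi_1}$ and $\ket{\varphi_2}$ on $a_i$, that is, $U(\mathcal{V}^{(i)}_{\varphi_j})=\mathcal{V}^{(i)}_{\varphi_j}$ with $\mathcal{V}^{(i)}_{\varphi_j}=\mathcal{H}_{\W_i}\otimes\mathrm{span}\{\ket{\varphi_j}\}_{a_i}$.

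\textbf{Step 3 (clean checks to commutators).} Apply \cref{thm:clean-reduction-gen}, again with working register $\W_i$ and the single-qubit target state $\ket{\varphi_j}$. This converts each clean-safety condition into $[U,Q_{(a_i,\ket{\varphi_j})}]=0$, because $Q_{(a_i,\ket{\varphi_j})}$ is exactly the reflection $2(I_{\W_i}\otimes\ket{\varphi_j}\!\bra{\varphi_j})-I$ from that theorem.

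The one point needing care is that the earlier lemmas assume the ancilla is a suffix of the qubit ordering, whereas here $a_i$ may sit anywhere. I will handle this using \cref{rem:boundary}. Conjugate by the permutation $\Pi$ that moves $a_i$ to the end. Then check that the reflection $Q_{(a_i,\ket{\varphi_j})}$, defined with $\ket{\varphi_j}\!\bra{\varphi_j}$ acting in place on $a_i$, satisfies
\[
\Pi\, Q_{(a_i,\ket{\varphi_j})}\,\Pi^\dagger = 2(I\otimes\ket{\varphi_j}\!\bra{\varphi_j})-I .
\]
It then follows that
\[
[U,Q]=0 \iff [\Pi U\Pi^\dagger,\ \Pi Q\Pi^\dagger]=0 .
\]
This bookkeeping is the only real obstacle; the rest is direct composition.

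\textbf{Equivalent form.} Since $U$ is unitary, $UQ=QU$ holds if and only if $UQU^\dagger=Q$: multiply on the right by $U^\dagger$, or by $U$ for the converse. This gives the second formulation of the theorem.

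Finally, the hypothesis $0<|\braket{\varphi_1}{\varphi_2}|<1$ is exactly what \cref{lem:dirty-two-states} requires, and it is used there unchanged. The total count is therefore $2m$ commutator checks.
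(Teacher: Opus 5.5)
Your proposal is correct and follows the paper's route: reduce to single ancilla qubits via \cref{lem:dirty-multi-single}, then turn each clean-safety condition into a commutator via \cref{thm:clean-reduction-gen}. Your permutation bookkeeping is more careful than the paper, which leaves qubit placement implicit via \cref{rem:boundary}. The only difference is that the paper does not cite \cref{lem:dirty-two-states}, which has no separate proof in the appendix, but proves that step inline: it identifies $\ket{\varphi_1}$ with $\ket{0}$, writes $U=V_0\otimes\ket{0}\!\bra{0}+V_1\otimes\ket{1}\!\bra{1}$, and matches the nonzero coefficients of $\ket{\varphi_2}=\alpha\ket{0}+\beta\ket{1}$ to force $V_0=V_1$. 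To make your argument self-contained you would need to include that short computation.
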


To illustrate \cref{thm:dirty-reduction}, we use it to show the dirty ancilla safety of circuits in~\cref{fig:clean_circuit} and \cref{fig:dirty_circuit}. 
Throughout, we choose the canonical test pair $\ket{0}$ and $\ket{+}=\tfrac{1}{\sqrt{2}}(\ket{0}+\ket{1})$, which are linearly independent and non-orthogonal.

\begin{example}[A circuit that is \emph{not} dirty safe (via commutativity checks)]\label{ex:dirty-unsafe}
Consider the circuit in \cref{fig:clean_circuit} with working register $\W=\{q_0,q_1\}$ and a single ancilla $\A=\{a\}$.
We apply \cref{thm:dirty-reduction} with the linearly independent and non-orthogonal test states
$\{\ket{\varphi_1},\ket{\varphi_2}\}=\{\ket{0},\ket+\}$.
For this single-ancilla case,
\[
\begin{aligned}
Q_{(a,\ket0)}
&=2(I_\W\otimes \ket{0}\!\bra{0}_a)-I
=I_\W\otimes Z_a,\\
Q_{(a,\ket+)}
&=2(I_\W\otimes \ket{+}\!\bra{+}_a)-I
=I_\W\otimes X_a.
\end{aligned}
\]
Therefore, dirty ancilla safety would require \emph{both} commutators to vanish:
\[[U,I_\W\otimes Z_a]=0,\quad [U,I_\W\otimes X_a]=0.\]

For the circuit $U=\mathrm{CNOT}_{q_0,a}\,\mathrm{CNOT}_{a,q_1}\,\mathrm{CNOT}_{q_0,a}$, one checks that
\[
U\,(I_\W\otimes Z_a)\,U^\dagger \;=\; I_\W\otimes Z_a,
\]
so $[U,Q_{(a,\ket0)}]=0$ holds.
However,
\[
U (I_\W \otimes X_a) U^\dagger
=
I_{q_0}\otimes X_{q_1}\otimes X_a
\neq
I_\W\otimes X_a .
\]
Equivalently $[U,Q_{(a,\ket+)}]\neq 0$.
By \cref{thm:dirty-reduction}, failing the $\ket{+}$ commutativity test implies that the circuit \emph{is not} dirty safe.
\end{example}

\begin{example}[A dirty safe circuit (via commutativity checks)]\label{ex:dirty_safe}
Now consider the 4-CNOT circuit in \cref{fig:dirty_circuit}, again with working register $W=\{q_0,q_1\}$ and ancilla $\A=\{a\}$.
Using the same test states $\ket{0}$ and $\ket{+}$, we again have
$Q_{(a,\ket0)}=I_\W\otimes Z_a$ and $Q_{(a,\ket+)}=I_\W\otimes X_a$.

Let $U$ denote the unitary implemented by \cref{fig:dirty_circuit}.
For this circuit, the extra uncomputation step ensures that both reflections are invariant:
\[
U\,(I_\W\otimes Z_a)\,U^\dagger \;=\; I_\W\otimes Z_a,
\qquad
U\,(I_\W\otimes X_a)\,U^\dagger \;=\; I_\W\otimes X_a.
\]
Equivalently, $[U,Q_{(a,\ket0)}]=0$ and $[U,Q_{(a,\ket+)}]=0$.
Hence, by \cref{thm:dirty-reduction}, the circuit in \cref{fig:dirty_circuit} \emph{is} dirty safe.
\end{example}

\section{Verification and Repair Algorithms}
In this section, we present an end-to-end workflow for ancilla qubit safety.
Using the commutativity characterizations proved earlier, we derive a practical dirty ancilla safety verifier (\cref{alg:dirty_safe-verification}) that performs two commutativity checks per ancilla qubit, a Pauli-$Z$ check and a Pauli-$X$ check corresponding to the test states $\ket{0}$ and $\ket{+}$, through a backend-agnostic primitive \textsc{CheckCommutativity}.

A case study illustrates how this dual check diagnoses failures as either \texttt{LogicError} or \texttt{PhaseError}.
Finally, we exploit this diagnosis to propose a lightweight local-rotation repair routine (\cref{alg:deterministic-repair}) for faults that can be corrected by appending single-qubit gates on the offending ancilla.

\subsection{Verification}

Our verifier (\cref{alg:dirty_safe-verification}) is a direct instantiation of the commutativity reduction
in \cref{thm:dirty-reduction} together with the commutativity
characterization of clean ancilla safety in \cref{thm:clean-reduction}.
Throughout, we use the canonical test states
\(
\ket{\varphi_1}=\ket{0}
\)
and
\(
\ket{\varphi_2}=\ket{+}
\), which are linearly independent and non-orthogonal,
so that the associated Householder reflections coincide with Pauli operators (up to an irrelevant global phase): $
Q_{(a,\ket0)} =\; I_{\W\cup(\A\setminus\{a\})}\otimes Z_a$ and   $Q_{(a,\ket+)} =\; I_{\W\cup(\A\setminus\{a\})}\otimes X_a.$

Therefore, for each ancilla qubit $a\in A$, dirty ancilla safety is equivalent to the conjunction of two commutativity checks:
$
[U,\; I_{\W\cup(\A\setminus\{a\})}\otimes Z_a]=0$ and 
$[U,\; I_{\W\cup(\A\setminus\{a\})}\otimes X_a]=0.
$

\textit{Diagnosis.}
We report violations using two coarse-grained error types aligned with the two obligations above.
The \emph{$Z$-check} is sensitive to computational basis leakage of the target ancilla (failure to preserve the $\ket{0}/\ket{1}$ eigenspaces on $a$), and we report this as \texttt{LogicError}.
The \emph{$X$-check} is sensitive to coherence/phase faults on superposition inputs (failure to preserve the $\ket{+}/\ket{-}$ eigenspaces on $a$), including the case where the ancilla acquires a relative phase (e.g., $\ket{+}\mapsto\ket{-}$) or becomes correlated with the working register in the $X$ basis; we report this as \texttt{PhaseError}.
Passing both checks for every $a\in A$ certifies dirty ancilla safety.

\textit{Backend abstraction.}
We encapsulate the commutativity test (checking clean ancilla safety in \cref{thm:clean-reduction}) in a primitive
\textsc{CheckCommutativity}$(U,Q)$ that returns \textbf{True} if and only if $UQ=QU$
(equivalently $UQU^\dagger=Q$).
A backend may implement this contract by explicit matrix algebra~\cite{Jones_2019}, decision diagrams~\cite{Wille_2022,miller2006qmdd}, tensor networks~\cite{Markov_2008,Fishman_2022,Kissinger_2020}, or any other symbolic engine. Some of them will be shown in the next section for numerical experiments.
As we will see in the experimental results, the choice of backend can significantly affect runtime, even though the verification logic remains the same.

\begin{algorithm}[t]
\caption{\textsc{DirtySafetyVerifier}$(U,\W,\A)$}
\label{alg:dirty_safe-verification}
\KwIn{Unitary circuit $U$, working qubits $\W$, ancilla qubits $\A$.}
\KwOut{\textbf{True} indicates all ancilla qubits in $\A$ are dirty safe, otherwise \textbf{False} and an \textsf{ErrorList}.}

$res\gets \textbf{True}$;\quad $\textsf{ErrorList}\gets[\ ]$\;

\ForEach{$a\in \A$}{
  $Q_Z \gets I_{\W\cup(\A\setminus\{a\})}\otimes Z_a$\;
  \If{\textsc{CheckCommutativity}$(U,Q_Z)$ is \textbf{False}}{
    $res\gets \textbf{False}$\;
    Add $(a,\texttt{LogicError})$ to $\textsf{ErrorList}$\;
  }

  $Q_X \gets I_{\W\cup(\A\setminus\{a\})}\otimes X_a$\;
  \If{\textsc{CheckCommutativity}$(U,Q_X)$ is \textbf{False}}{
    $res\gets \textbf{False}$\;
    Add $(a,\texttt{PhaseError})$ to $\textsf{ErrorList}$\;
  }
}
\Return $res,\textsf{ErrorList}$
\end{algorithm}

\textit{Correctness and complexity.}
Let $N$ be the total qubit number and $m=|\A|$.
By \cref{thm:dirty-reduction}, the algorithm is sound and complete:
it returns \textbf{True} if and only if $U$ is dirty safe on $\A$.
It performs exactly $2m$ commutativity checks; thus, its runtime is
$O(m\cdot T_{\mathrm{comm}}(N))$ where $T_{\mathrm{comm}}(N)$ is the cost of one call to
\textsc{CheckCommutativity} under the chosen backend.

\medskip
We next demonstrate how \cref{alg:dirty_safe-verification} separates
computational basis leakage from coherence failures, illustrating why both checks are necessary.

\begin{figure}[ht]
    \centering
    \begin{quantikz}[row sep=3mm]
    \lstick{$q_0$} & \ctrl{3} & \qw      & \ctrl{3} & \qw      & \qw      & \qw      & \qw      & \qw      & \qw                & \qw \\
    \lstick{$q_1$} & \qw      & \targ{}  & \qw      & \targ{}  & \ctrl{3} & \qw      & \ctrl{3} & \qw      & \qw                & \qw \\
    \lstick{$q_2$} & \qw      & \qw      & \qw      & \qw      & \qw      & \targ{}  & \qw      & \targ{}  & \qw                & \qw \\
    \lstick{$a_0$} & \targ{}  & \ctrl{-2}& \targ{}  & \ctrl{-2}& \qw      & \qw      & \qw      & \qw      & \gate[style={draw,densely dotted,rounded corners}]{Z} & \qw \\
    \lstick{$a_1$} & \qw      & \qw      & \qw      & \qw      & \targ{}  & \ctrl{-2}& \targ{}  & \ctrl{-2}& \qw                & \qw 
    \end{quantikz}
    \caption{A polluted cascaded-mediation circuit: ancilla $a_1$ is unchanged, while ancilla $a_0$ is corrupted by an injected $Z$ gate at the end of its cycle.}
    \label{fig:Double CX Z dirty}
\end{figure}

\begin{example}[Diagnosing a hidden \texttt{PhaseError}]\label{ex:phase-fault} Consider the cascaded mediation circuit of \cref{fig:Double CX Z dirty},
where we inject a single $Z$ gate on ancilla $a_0$ at the end of its cycle.
Let $U_{\mathrm{ideal}}$ denote the error-free unitary and
\(
U_{\mathrm{polluted}} := Z_{a_0}\,U_{\mathrm{ideal}}
\)
denote the polluted one.

\begin{itemize}
    \item \emph{Ancilla $a_1$ (baseline)}. The sub-circuit for $a_1$ has the standard compute--uncompute form, hence both commutators vanish: 
    \[
    [U_{\mathrm{ideal}},\ I\otimes Z_{a_1}]=0
        \quad\text{and}\quad
        [U_{\mathrm{ideal}},\ I\otimes X_{a_1}]=0.
    \]

    \cref{alg:dirty_safe-verification} therefore reports no error on $a_1$.
    \item \emph{Ancilla $a_0$ (injected $Z$ error)}. For the polluted circuit, the $Z$-check still passes:
    \[
    [U_{\mathrm{polluted}},\ I\otimes Z_{a_0}]
        =
        [Z_{a_0}U_{\mathrm{ideal}},\ Z_{a_0}]
        =
        Z_{a_0}[U_{\mathrm{ideal}},Z_{a_0}] + [Z_{a_0},Z_{a_0}]U_{\mathrm{ideal}}
        =
        0.
    \]

    so the verifier raises \emph{no} \texttt{LogicError} on $a_0$ (the injected $Z$ does not create computational basis leakage).

    However, the $X$-check fails because $Z_{a_0}$ anti-commutes with $X_{a_0}$:

    \[
    \begin{aligned}
    [U_{\mathrm{polluted}},\ I\otimes X_{a_0}]
    &= Z_{a_0}[U_{\mathrm{ideal}},X_{a_0}]
       + [Z_{a_0},X_{a_0}]U_{\mathrm{ideal}} \\
    &= [Z_{a_0},X_{a_0}]U_{\mathrm{ideal}} \\
    &\neq 0.
    \end{aligned}
    \]

    Hence \cref{alg:dirty_safe-verification} flags $a_0$ with a \texttt{PhaseError},
    capturing the fact that a trailing $Z$ gate is invisible on $\ket{0}$-inputs but flips the relative phase on $\ket{+}$-inputs (e.g., $\ket{+}\mapsto\ket{-}$).
\end{itemize}
\end{example}

This example shows why a single-basis (only-$Z$) check is insufficient for dirty ancilla safety:
it can miss “silent” phase faults. The dual check mandated by \cref{thm:dirty-reduction}
is necessary to rule out both computational basis leakage (\texttt{LogicError}) and coherence/phase failures (\texttt{PhaseError}).

\subsection{Repairing Unsafe Ancilla Qubits by Local Rotations}\label{sec:repair}

According to the error types (\texttt{LogicError} and \texttt{PhaseError}), we propose a deterministic repair routine (\cref{alg:deterministic-repair}) for the subset of unsafe dirty ancilla qubits whose violations are \emph{local} to the ancilla and do not entangle it with the rest of the circuit. 
This restriction is deliberate: local repair is intended as a lightweight post-compilation correction when the residual fault is genuinely ancilla-local, rather than as a replacement for global uncomputation or resynthesis. 
In this separable case, the error manifests as a single-qubit unitary distortion on the ancilla, and can therefore be corrected by appending only single-qubit rotations, without touching the working register.

\paragraph{Separable pre-check.}
A necessary prerequisite for local repair is that the verifier’s witness remains confined to the target ancilla.
Concretely, for $\ket\varphi\in\{\ket0,\ket+\}$, let
\[
Q_{(a,\ket\varphi)} \;:=\; I_{\W\cup(\A\setminus\{a\})}\otimes(2\ket\varphi\!\bra\varphi-I_a),
\qquad
Q'_{(a,\ket\varphi)} \;:=\; U\,Q_{(a,\ket\varphi)}\,U^\dagger.
\]
The error on $a$ is \emph{separable} if $Q'_{(a,\ket\varphi)}$ factorizes as
\[
Q'_{(a,\ket\varphi)} \;=\; I_{\W\cup(\A\setminus\{a\})}\otimes U_{(a,\ket\varphi)}
\]
for some single-qubit unitary $U_{(a,\ket\varphi)}$, i.e., the evolved operator has no support outside $a$ (equivalently, the error does not create entanglement or correlation between $a$ and the other qubits)~\cite{QCQI}.
If this pre-check fails, no local unitary on $a$ can decouple the registers, so \cref{alg:deterministic-repair} immediately reports $a$ as irreparable and adds it to \textsf{FailList} (the separability-check branch in the algorithm).

\paragraph{Repair strategy.}
\Cref{alg:deterministic-repair} proceeds ancilla-by-ancilla.
It first diagnoses each $a$ by invoking \textsc{DirtySafetyVerifier} (Algorithm~\ref{alg:dirty_safe-verification}) on the singleton set $\{a\}$ (Algorithm~\ref{alg:deterministic-repair}, Line~\ref{alg:each_a}); if the check passes, we skip $a$.
Otherwise, after the separability pre-check, the repair is chosen deterministically from the reported syndrome:
(i) for an isolated \texttt{LogicError}, we append a single $R_X(\theta)$;
(ii) for an isolated \texttt{PhaseError}, we append a single $R_Z(\theta)$;
and (iii) when both errors are present, we synthesize a three-gate $Z$--$X$--$Z$ Euler patch
$R_Z(\theta_{z_1})R_X(\theta_x)R_Z(\theta_{z_2})$ that corrects an arbitrary single-qubit unitary in $\mathrm{SU}(2)$
(Algorithm~\ref{alg:deterministic-repair}, Line~\ref{ln:repair:hybrid}).
All angles are computed analytically from the locally evolved operators (details in Appendix~\ref{app:implementation}).

Finally, the algorithm reruns \textsc{DirtySafetyVerifier} (Algorithm~\ref{alg:dirty_safe-verification})  on $\{a\}$ to validate the patch and commits it only if the post-check succeeds (the last two lines inside the loop); otherwise, it discards the patch and adds $a$ to \textsf{FailList}.

\begin{algorithm}[ht]
\caption{\textsc{LocalRepair}$(U,\W,\A)$}
\label{alg:deterministic-repair}
\KwIn{Circuit $U$, working qubits $\W$, ancilla qubits $\A$.}
\KwOut{Repaired circuit $U_{\mathrm{fix}}$, irreparable ancilla qubits \textsf{FailList}.}

$U_{\mathrm{fix}}\gets U$;\quad $\textsf{FailList}\gets[\ ]$\;

\BlankLine
\ForEach{$a\in \A$}{ \label{alg:each_a}
  $(\textrm{IsSafe},\ \textrm{Err}) \gets \textsc{DirtySafetyVerifier}(U_{\mathrm{fix}},\W\cup (\A\setminus\{a\}),\{a\})$\;
  
  \If{$\textrm{IsSafe}$}{\textbf{continue}}

  \BlankLine
  \If{$Q'_{(a,\ket0)}$ or $Q'_{(a,\ket+)}$ is not separable}{
    Add $a$ to \textsf{FailList};\quad \textbf{continue}
  }

  \BlankLine
  $U_{\mathrm{tmp}} \gets U_{\mathrm{fix}}$\;
  
  \uIf{\texttt{LogicError} $\in$ \textrm{Err} and \texttt{PhaseError} $\in$ \textrm{Err}}{
      \label{ln:repair:hybrid}
      Compute $\theta_{z_1}, \theta_x, \theta_{z_2}$ analytically\;
      $U_{\mathrm{tmp}} \gets (I \otimes R_Z(\theta_{z_1})_a R_X(\theta_x)_a R_Z(\theta_{z_2})_a)\cdot U_{\mathrm{tmp}}$\;
  }
  \uElseIf{\texttt{LogicError} $\in$ \textrm{Err}}{
    Compute $\theta$ analytically\;
    $U_{\mathrm{tmp}} \gets (I\otimes R_X(\theta)_a)\cdot U_{\mathrm{tmp}}$\;
  }
  \uElseIf{\texttt{PhaseError} $ \in \textrm{Err}$}{
    Compute $\theta$ analytically\;
    $U_{\mathrm{tmp}} \gets (I\otimes R_Z(\theta)_a)\cdot U_{\mathrm{tmp}}$\;
  }

  \BlankLine
  $(\textrm{IsSafe}_{\mathrm{new}},\ \_) \gets \textsc{DirtySafetyVerifier}(U_{\mathrm{tmp}},\W\cup (\A\setminus\{a\}),\{a\})$\;
  
  \eIf{$\textrm{IsSafe}_{\mathrm{new}}$}{
      $U_{\mathrm{fix}} \gets U_{\mathrm{tmp}}$\;
  }{
      Add $a$ to \textsf{FailList};
  }
}
\Return $U_{\mathrm{fix}}, \textsf{FailList}$
\end{algorithm}

\paragraph{Complexity.}

Let $N$ be the total number of qubits, and $m=|\A|$. 
Each invocation of Algorithm~\ref{alg:dirty_safe-verification} is restricted to a single target ancilla qubit. Hence, the runtime is dominated by the backend costs for (i) running Algorithm~\ref{alg:dirty_safe-verification} on a single ancilla ($O( T_{\mathrm{comm}}(N))$) and (ii) extracting the local operator or state information needed for separability and angle synthesis ($T_{\mathrm{loc}}(N)$).
Angle computation itself is constant-time once the $2\times2$ local information is available.
In the worst case, each ancilla triggers a constant number of backend calls, so the total cost is
\[
O\!\big(m\cdot T_{\mathrm{comm}}(N)+ m\cdot T_{\mathrm{loc}}(N)\big),
\]
i.e., linear in the number of ancillae, avoiding iterative search over continuous parameters.

We close this subsection with a minimal end-to-end example showing that the deterministic repair can analytically eliminate the ``silent'' $Z$-type phase error from Example~\ref{ex:phase-fault}.

\begin{example}[Analytic repair of the hidden phase error]\label{ex:repair-phase-fault-deterministic}
Consider the polluted circuit from Example~\ref{ex:phase-fault},
$U_{\mathrm{polluted}} := Z_{a_0}U_{\mathrm{ideal}}$.
Algorithm~\ref{alg:dirty_safe-verification} flags $a_0$ with a \texttt{PhaseError}.
Since the injected $Z_{a_0}$ is local, the separability pre-check in \cref{alg:deterministic-repair} succeeds.
The \texttt{PhaseError} branch then computes $\theta=\pi$, yielding $R_Z(\pi)=-iZ$ as the correcting gate.
Appending it cancels the injected error:
\[
R_Z(\pi)_{a_0}\cdot U_{\mathrm{polluted}}
=
(-iZ_{a_0})\cdot Z_{a_0}U_{\mathrm{ideal}}
=
(-i)\,U_{\mathrm{ideal}}.
\]
Thus the circuit is restored to $U_{\mathrm{ideal}}$ up to a global phase, and the post-check in \cref{alg:deterministic-repair} accepts the repair.
\end{example}

\section{Experiment and Results}
\label{sec:exp}

In this section, we implement the full verify--repair--reverify pipeline (\cref{alg:dirty_safe-verification} and~\cref{alg:deterministic-repair}) in a prototype tool. All source files are available at the GitHub project~\url{https://github.com/Veri-Q/Ancilla-Safety}.

We validate the tool on diverse circuits ranging from arithmetic benchmarks to Grover's algorithm.
By invoking \cref{thm:dirty-reduction}, the verification of an $m$-qubit ancilla register  strictly decomposes into $2m$ independent verification tasks.
Consequently, the total verification overhead scales linearly with the number of target ancilla qubits $m$ (in terms of the number of backend invocations).
Given this decomposition, our experimental evaluation focuses on profiling the performance of verifying a single ancilla qubit, as the end-to-end runtime for an $m$-ancilla register is simply the aggregation of these independent checks.

\paragraph{Backends and oracle.}
Our verification conditions are \emph{backend-agnostic}: in the framework, ancilla safety is reduced to fixed semantic checks.
Any sound equivalence or satisfiability engine can be plugged in to discharge them.
In our prototype, we instantiate this interface with two complementary symbolic backends: \textsc{Quokka-Sharp}, a quantum circuit analysis toolkit based on weighted model counting, and \textsc{QCEC}~\cite{qcec}, a state-of-the-art verification tool for equivalence checking.
For validation, we also implement a direct matrix-simulation oracle.
Although limited to small circuits due to exponential memory, it provides ground-truth verdicts for differential testing and sanity checks.

\paragraph{Platform.}
All experiments were run on a laptop with a 13th Gen Intel Core i9-13900H CPU and 32 GB RAM.

\paragraph{Benchmarks.}
We use a benchmark suite spanning structured algorithms and randomized circuits:
\begin{enumerate}
    \item \textit{Grover's search:} 
    We select Grover's algorithm as a representative benchmark for complex quantum programs. 
    This choice is motivated by the fact that Grover's search encapsulates the typical behaviors of advanced quantum routines: most complex algorithms either mirror Grover's patterns in utilizing \textit{dirty ancillae} or do not rely on ancillae significantly. 
    Our evaluation covers full circuits, single-iteration blocks, and variable-$N$/variable-round configurations~\cite{grover1996}, all of which exhibit a highly regular iterative structure. 
    In the following sections, $\text{Grover}_{r1}$ specifically denotes the Grover's algorithm configured with a single Grover cycle.
    \item \textit{Arithmetic and control logic:} adders~\cite{Haner2017}, Bridge GHZ~\cite{Itoko2020} (Appendix~\ref{subsec:ghz_bridge}), and multi-controlled-$X$ (MCX), including a clean safe version generated by Reqomp~\cite{Paradis2022Reqomp} and a dirty safe MCX implementation~\cite{mcxhttps}.
    \item \textit{Random circuits:} (i) uniformly random universal circuits and (ii) Qiskit-generated identity circuits (up to global phase).
\end{enumerate}

\subsection{Soundness and Scalability of Verification}

We evaluate two aspects of the verification process~\cref{alg:dirty_safe-verification}: \emph{soundness} and \emph{scalability}.
Soundness is assessed by differential testing against the matrix oracle in small instances.
Scalability is assessed in large circuits where matrix simulation is infeasible.
Runtimes are reported in seconds. We set a timeout of 3600\,s and a memory limit of 32\,GB.

In \cref{tab:final_unified_comparison}, $N$ is the total number of qubits, including working qubits and ancilla qubits. Memout and Timeout indicate that the verification process was terminated due to memory exhaustion and time limit violations, respectively. True denotes that the circuit was verified as safe, while the output: \texttt{LogicError} and \texttt{PhaseError} indicates that it was verified as unsafe, and corresponding errors are detected, \texttt{BothError} indicates that both errors are detected. A dash (--) represents that no output was generated. The table shows that the matrix oracle quickly runs into Memout as the number of qubits $N$ grows, while both symbolic backends scale to substantially larger instances. Across all oracle-comparable cases, the verdicts match, supporting the correctness of the implementations. The complete experimental data are provided in Appendix~\ref{app:full_data_unified_final}.

\begin{table*}[htbp]
\centering
\caption{\textbf{Comparison of verification efficiency and results.} Time is in seconds. ``--'' indicates the step is skipped due to timeout/memout.}
\label{tab:final_unified_comparison}
\renewcommand{\arraystretch}{0.9} 
\setlength{\tabcolsep}{4pt}  

\resizebox{\textwidth}{!}{
\begin{tabular}{@{}l cc rr rr rr@{}}
\toprule
\multirow{2}{*}{\textbf{Benchmark}} & \multirow{2}{*}{\boldmath$N$} & \multirow{2}{*}{\textbf{Depth}} & \multicolumn{2}{c}{\textbf{Matrix}} & \multicolumn{2}{c}{\textbf{Quokka-Sharp}} & \multicolumn{2}{c}{\textbf{QCEC}} \\
\cmidrule(lr){4-5} \cmidrule(lr){6-7} \cmidrule(lr){8-9}
 & & & \textbf{Time} & \textbf{Output} & \textbf{Time} & \textbf{Output} & \textbf{Time} & \textbf{Output} \\
\midrule

\multirow{3}{*}{Grover}
 & 7  & 55      & 0.1 & \stSafe  & 6.4 & \stSafe      & 0.2 & \stSafe  \\
 & 9  & 99      & 1.0 & \stSafe  & Timeout & \dash    & 0.3 & \stSafe  \\
 & 39 & 115779   & Memout & \dash & Timeout & \dash    & 25.9 & \stSafe  \\
\midrule

\multirow{2}{*}{Grover$_{r_1}$}
 & 319 & 1267 & Memout & \dash & 68.3 & \stSafe  & 5.5 & \stSafe  \\
 & 559 & 2227 & Memout & \dash & 223.8 & \stSafe & 15.6 & \stSafe \\
\midrule

\multirow{3}{*}{Adder}
 & 13 & 44 & 160.0 & \stSafe & 0.1 & \stSafe & 0.6 & \stSafe \\
 & 1499 & 5988  & Memout & \dash & 20.2 & \stSafe      & 81.1 & \stSafe      \\
 & 5999 & 23988 & Memout & \dash & 317.8 & \stSafe & Memout & \dash \\
\midrule

\multirow{3}{*}{\shortstack[l]{Bridge\\GHZ}}
 & 1399 & 2797  & Memout & \dash & 2.5 & \stSafe & 27.7 & \stSafe \\
 & 1999 & 3997  & Memout & \dash & 5.8 & \stSafe & 67.1 & \stSafe \\
 & 4799 & 9597  & Memout & \dash & 11.9 & \stSafe & Memout & \dash \\
\midrule

\multirow{3}{*}{\shortstack[l]{Bridge GHZ\\(Phase Err.)}}
 & 11 & 21 & 3.7 & \stPhase & 0.1 & \stPhase & 0.3 & \stPhase \\
 & 13 & 25 & 109.3 & \stPhase & 0.1 & \stPhase & 0.4 & \stPhase \\
 & 15 & 29 & Memout & \dash   & 0.1 & \stPhase & 0.4 & \stPhase \\
\midrule

\multirow{3}{*}{\shortstack[l]{Bridge GHZ\\(Logic Err.)}}
 & 11 & 21 & 3.6 & \stLogic & 0.1 & \stLogic & 0.3 & \stLogic \\
 & 13 & 25 & 104.9 & \stLogic & 0.1 & \stLogic & 0.3 & \stLogic \\
 & 15 & 29 & Memout & \dash   & 0.1 & \stLogic & 0.3 & \stLogic \\
\midrule

\multirow{3}{*}{\shortstack[l]{Reqomp-\\MCX}}
 & 11 & 9 & 3.2 & \stPhase & 0.1 & \stPhase & 0.3 & \stPhase \\
 & 13 & 11 & 96.1 & \stPhase & 0.1 & \stPhase & 0.4 & \stPhase \\
 & 15 & 13 & Memout & \dash   & 0.1 & \stPhase & 0.3 & \stPhase \\
\midrule

\multirow{3}{*}{MCX}
 & 1599 & 3192  & Memout & \dash & 28.0 & \stSafe  & 50.2 & \stSafe \\
 & 1799 & 3592  & Memout & \dash & 33.4 & \stSafe  & 63.9 & \stSafe \\
 & 5999 & 11992 & Memout & \dash & 414.4 & \stSafe & Memout & \dash \\
\midrule

\multirow{3}{*}{\shortstack[l]{Identity\\Random\\Circuit}}
 & 5  & 47 & 0.1 & \stSafe & 0.4 & \stSafe & 0.2 & \stSafe \\
 & 6  & 53 & 0.1 & \stSafe & 1.4 & \stSafe & 0.2 & \stSafe \\
 & 7  & 54 & 0.2 & \stSafe & 5.3 & \stSafe & 0.2 & \stSafe \\
 \midrule

\multirow{3}{*}{\shortstack[l]{Random\\Circuit}}
 & 30 & 11 & Memout & \dash & 0.1 & \stHybrid & 0.3 & \stHybrid \\
 & 50 & 14 & Memout & \dash & 0.1 & \stHybrid & 0.4 & \stHybrid \\
 & 70 &  4 & Memout & \dash & 0.1 & \stHybrid & 0.5 & \stHybrid \\
\bottomrule
\end{tabular}
}
\end{table*}

\begin{figure}[htpb]
    \centering
    \begin{tikzpicture}
    \begin{groupplot}[
        group style={group size=2 by 1, horizontal sep=1.2cm},
        grid=major,
        width=6.2 cm,
        height=4.5 cm,
        every axis title/.append style={at={(0.5,1.4)}, anchor=south},
    ]

    \nextgroupplot[
        xlabel={Total Qubits},
        ylabel={Time (s)},
        title={MCX},
        legend style={at={(0.5,1.05)}, anchor=south, legend columns=1, font=\footnotesize},
        legend cell align={left},
    ]
    \addplot[color=blue, mark=o, thick] coordinates {
        (199, 0.43) (599, 3.17) (999, 9.03) (1399, 16.71) (1599, 22.46) (1799, 28.26) (1999, 33.41)
    };
    \addlegendentry{Quokka-Sharp}
    \addplot[color=red, mark=x, thick] coordinates {
        (199, 1.89) (599, 8.69) (999, 15.64) (1399, 29.74) (1599, 43.42)
    };
    \addlegendentry{QCEC}

    \nextgroupplot[
        xlabel={Rounds ($R$)},
        title={Grover},
        ymode=log,
        legend style={at={(0.5,1.05)}, anchor=south, legend columns=2, font=\tiny},
        legend cell align={left},
    ]
    \addplot[color=blue, mark=o, thick] coordinates {
        (1, 0.03) (2, 0.07) (3, 0.23) (4, 0.40) (5, 0.60) (6, 0.83) (7, 1.06) (8, 1.64)
    };
    \addlegendentry{N=5 (Quokka-Sharp)}
    \addplot[color=red, mark=x, thick] coordinates {
        (1, 0.17) (2, 0.16) (3, 0.17) (4, 0.16) (5, 0.16) (6, 0.17) (7, 0.18) (8, 0.17)
    };
    \addlegendentry{N=5 (QCEC)}

    \addplot[color=blue, mark=square, thick] coordinates {
        (1, 0.06) (2, 0.87) (3, 6.26) (4, 16.92) (5, 22.68) (6, 35.59) (7, 51.35) (8, 220.40)
    };
    \addlegendentry{N=7 (Quokka-Sharp)}
    \addplot[color=red, mark=Mercedes star, thick] coordinates {
        (1, 0.18) (2, 0.18) (3, 0.20) (4, 0.21) (5, 0.21) (6, 0.19) (7, 0.21) (8, 0.19)
    };
    \addlegendentry{N=7 (QCEC)}

    \addplot[color=blue, mark=pentagon, thick] coordinates {
        (1, 0.06) (2, 2.82) (3, 110.51)
    };
    \addlegendentry{N=9 (Quokka-Sharp)}
    \addplot[color=red, mark=10-pointed star, thick] coordinates {
        (1, 0.19) (2, 0.23) (3, 0.25) (4, 0.22) (5, 0.21) (6, 0.21) (7, 0.21) (8, 0.22)
    };
    \addlegendentry{N=9 (QCEC)}

    \end{groupplot}
    \end{tikzpicture}
    \caption{Quokka-Sharp and QCEC performance on MCX and Grover circuits.}
    \label{fig:2performance}
\end{figure}

\subsection{Algorithms with Different Parameter Settings}
We further investigate the performance of the two verification backends on
MCX and Grover circuits to distinguish their respective optimal performance ranges.
The vertical axis represents time (seconds).

\cref{fig:2performance} shows complementary regimes.
On MCX (left), \textsc{Quokka-Sharp} scales more smoothly as $N$ increases, while \textsc{QCEC} exhibits rapid growth and eventually hits memory limits at larger sizes.
On Grover (right), \textsc{QCEC} remains nearly constant-time across rounds, whereas \textsc{Quokka-Sharp} grows quickly with the number of rounds and times out at larger configurations.

\subsection{Full Pipeline of Verify--Repair--Reverify}
We further implement Algorithm 2 by extending the existing implementation, which did not originally support this functionality,
and evaluate the full pipeline from diagnosis to local-rotation repair and final re-verification.
For each benchmark, we select a small set of ancilla qubits to cover representative cases: (i) no injection (control), (ii) injected local unitary errors, and (iii) intrinsic nonlocal faults.

\begin{table}[htbp]
\centering
\caption{\textbf{End-to-end verify--repair--reverify results.} ``--'' indicates that the step is not executed.}
\label{tab:full_pipeline}
\resizebox{\textwidth}{!}{
\begin{tabular}{@{}lccc c llllll@{}}
\toprule
\multirow{2}{*}{\textbf{Case}} & \multirow{2}{*}{$N$} & \multirow{2}{*}{\textbf{Depth}} & \multirow{2}{*}{\textbf{Gates}} & \textbf{Inj.} & \textbf{Target} & \multicolumn{2}{c}{\textbf{Diagnosis Phase}} & \textbf{Repair Phase} & \textbf{Validation} & \textbf{Output} \\
\cmidrule(lr){7-8}
 & & & & \textbf{Err.} & \textbf{Qubit} & \textbf{Verify} & \textbf{Entangle Check} & \textbf{Repair Action} & \textbf{ReVerify} & \textbf{(Result)} \\
\midrule

\multirow{2}{*}{Grover} & \multirow{2}{*}{7} & \multirow{2}{*}{55} & \multirow{2}{*}{120} & No & $a_0$ & \stSafe & \dash & \dash & \cmark\ \stSafe & \outSafe \\
 & & & & No & $a_1$ & \stSafe & \dash & \dash & \cmark\ \stSafe & \textit{(Original Safe)} \\
\midrule

\multirow{3}{*}{Grover$_{r_1}$} & \multirow{3}{*}{139} & \multirow{3}{*}{547} & \multirow{3}{*}{1036} & No & $a_0$ & \stSafe & \dash & \dash & \cmark\ \stSafe & \outSafe \\
 & & & & Yes & $a_1$ & \stLogic & \stSep & Apply $R_X$ & \cmark\ \stSafe & \multirow{2}{*}{\textit{(Repaired Safe)}} \\
 & & & & Yes & $a_2$ & \stHybrid & \stSep & Apply $R_Z \cdot R_X \cdot R_Z$ & \cmark\ \stSafe & \\
\midrule

\multirow{3}{*}{MCX} & \multirow{3}{*}{399} & \multirow{3}{*}{792} & \multirow{3}{*}{795} & No & $a_0$ & \stSafe & \dash & \dash & \cmark\ \stSafe & \outSafe \\
 & & & & Yes & $a_1$ & \stPhase & \stSep & Apply $R_Z$ & \cmark\ \stSafe & \multirow{2}{*}{\textit{(Repaired Safe)}} \\
 & & & & Yes & $a_2$ & \stHybrid & \stSep & Apply $R_Z \cdot R_X \cdot R_Z$ & \cmark\ \stSafe & \\
\midrule

\multirow{3}{*}{Adder} & \multirow{3}{*}{399} & \multirow{3}{*}{1588} & \multirow{3}{*}{1591} & No & $a_0$ & \stSafe & \dash & \dash & \cmark\ \stSafe & \multirow{3}{*}{\shortstack{\outUnsafe \\ \textsf{FailList}: $\{a_1\}$}} \\
 & & & & Yes & $a_1$ & \stPhase & \stInsep & \dash & \dash & \\
 & & & & Yes & $a_2$ & \stPhase & \stSep & Apply $R_Z$ & \cmark\ \stSafe & \\
\midrule

\multirow{2}{*}{\shortstack[l]{Bridge\\GHZ}} & \multirow{2}{*}{3999} & \multirow{2}{*}{7997} & \multirow{2}{*}{7998} & No & $a_0$ & \stSafe & \dash & \dash & \cmark\ \stSafe & \outSafe \\
 & & & & Yes & $a_1$ & \stLogic & \stSep & Apply $R_X$ & \cmark\ \stSafe & \textit{(Repaired Safe)} \\
\midrule

\multirow{3}{*}{\shortstack[l]{Reqomp\\-MCX}} & \multirow{3}{*}{599} & \multirow{3}{*}{597} & \multirow{3}{*}{597} & No & $a_0$ & \stPhase & \stInsep & \dash & \dash & \multirow{3}{*}{\shortstack{\outUnsafe \\ \textsf{FailList}: \\ $\{a_0, a_1, a_2\}$}} \\
 & & & & Yes & $a_1$ & \stPhase & \stInsep & \dash & \dash & \\
 & & & & Yes & $a_2$ & \stPhase & \stInsep & \dash & \dash & \\
\midrule

\multirow{3}{*}{\shortstack[l]{Random\\Circuit}} & \multirow{3}{*}{30} & \multirow{3}{*}{23} & \multirow{3}{*}{180} & No & $a_0$ & \stHybrid & \stInsep & \dash & \dash & \multirow{3}{*}{\shortstack{\outUnsafe \\ \textsf{FailList}: \\ $\{a_0, a_1, a_2\}$}} \\
 & & & & Yes & $a_1$ & \stHybrid & \stInsep & \dash & \dash & \\
 & & & & Yes & $a_2$ & \stHybrid & \stInsep & \dash & \dash & \\
\midrule

\multirow{3}{*}{\shortstack[l]{Identity\\Random\\Circuit}} & \multirow{3}{*}{5} & \multirow{3}{*}{48} & \multirow{3}{*}{111} & No & $a_0$ & \stSafe & \dash & \dash & \cmark\ \stSafe & \multirow{3}{*}{\shortstack{\outUnsafe \\ \textsf{FailList}: $\{a_1\}$}} \\
 & & & & Yes & $a_1$ & \stPhase & \stInsep & \dash & \dash & \\
 & & & & Yes & $a_2$ & \stPhase & \stSep & Apply $R_Z$ & \cmark\ \stSafe & \\
\bottomrule
\end{tabular}
}
\end{table}

Table~\ref{tab:full_pipeline} presents the results and illustrates three consistent behaviors. 
(i) When an ancilla is already safe, the pipeline returns True and performs no repair.
(ii) For separable, locally correctable faults (phase/logic/hybrid as detected by the diagnosis), the pipeline applies a local correction and re-verifies to \textsc{True}.
(iii) When the diagnosis indicates an Entanglement, the pipeline does not apply a local repair and adds the qubit to the corresponding failure list \textsf{FailList}.
Finally, the pipeline returns the processed $U_\text{fix}$ and the corresponding \textsf{FailList} (if exists) of failed repairs.
Overall, the pipeline acts as a safety-first mechanism: it repairs exactly the locally invertible cases and otherwise surfaces faults that require global redesign or resynthesis.

\section{Conclusion}\label{sec:conclusion}
We presented an automated, end-to-end framework for verifying and repairing ancilla safety in universal quantum circuits. Our reduction turns dirty-safety on an $m$-qubit ancilla register into $2m$ independent Pauli-$Z$ or $X$ commutativity checks, yielding linear (and parallelizable) backend invocations in $m$. We implemented the full pipeline with interchangeable symbolic backends and validated it on diverse benchmarks, scaling to thousands of qubits. Experiments further show that the verifier yields actionable diagnoses by distinguishing logic errors from phase errors, and that separable ancilla-local faults can be repaired via lightweight single-qubit rotations followed by re-verification, while entanglement-triggered non-local violations are correctly rejected for local repair.

\begin{credits}

\subsubsection*{Acknowledgements}
We thank the anonymous reviewers for their careful and constructive feedback. This work is supported by Quantum Science and Technology-National Science and Technology Major Project under Grant No. 2024ZD0300500, the National Natural Science Foundation of China under Grant No. 62402485, the Youth Innovation Promotion Association, Chinese Academy of Sciences under Grant No. 2023116, the CCF-QuantumCtek Superconducting Quantum Computing Special Cooperation Program under Grant No. CCF-QC2025007, the Young Elite Scientists Sponsorship Program by the China Association for Science and Technology under Grant No. YESS20240449, and the International Partnership Program of the Chinese Academy of Sciences under Grant No. 096GJHZ2025013FN.
JM was supported by the Dutch National Growth Fund, as part of the Quantum Delta NL programme.
WF was supported by the Engineering and Physical Sciences Research Council under Grant EP/X025551/1.

\subsubsection*{Disclosure of Interests}
The authors have no competing interests to declare that are relevant to the content of this article.

\subsubsection*{Data-Availability}
The artifact supporting the experimental results reported in this paper is available on Zenodo at DOI: \href{https://doi.org/10.5281/zenodo.19784589}{10.5281/zenodo.19784589}. It contains the implementation and experimental materials needed to reproduce the reported results, subject to the limitations described in the artifact documentation.

\end{credits}

\bibliographystyle{splncs04}
\bibliography{reference.bib}

\input{appendix}

\end{document}

%% file: appendix.tex
\newpage
\appendix
\renewcommand{\theHsection}{\Alph{section}}
\renewcommand{\theHsubsection}{\Alph{section}.\arabic{subsection}}

\section{The details of \cref{exa:clean-example}}\label{app:detail-example}

To illustrate the proposed verification approach, we analyze the canonical 3-gate circuit shown in \cref{fig:clean_circuit}. This circuit implements a CNOT operation between working qubits $q_0$ and $q_1$ mediated by an ancilla $q_a$.
The system comprises working qubits $W=\{q_0, q_1\}$ and an ancilla $A=\{q_a\}$. The ancilla is initialized in $\ket{0}_a$ and is expected to return to $\ket{0}_a$.

\subsubsection{State Vector Evolution}
First, we verify the clean ancilla safety by tracing the evolution of computational basis states. Let the input state be $\ket{x, y}_\W \ket{0}_a$, where $x, y \in \{0, 1\}$. The evolution of the global state $\ket{\psi}$ proceeds as follows:
\begin{align*}
\ket{\psi_0} &= \ket{x}_{q_0} \ket{y}_{q_1} \ket{0}_{q_a} \\
&\xrightarrow{\text{CNOT}(q_0, q_a)} \ket{x}_{q_0} \ket{y}_{q_1} \ket{x}_{q_a} \\
&\xrightarrow{\text{CNOT}(q_a, q_1)} \ket{x}_{q_0} \ket{x \oplus y}_{q_1} \ket{x}_{q_a} \\
&\xrightarrow{\text{CNOT}(q_0, q_a)} \ket{x}_{q_0} \ket{x \oplus y}_{q_1} \ket{x \oplus x}_{q_a} = \ket{x}_{q_0} \ket{x \oplus y}_{q_1} \ket{0}_{q_a}
\end{align*}

The final state is $\ket{\psi_{final}} = \ket{x, x \oplus y}_\W \otimes \ket{0}_a$. The ancilla is disentangled and restored to $\ket{0}_a$ for all basis inputs, satisfying the clean ancilla safety definition.

\subsubsection{Subspace Invariance}
We examine the evolution of the linear subspace spanned by valid clean states.
Let $\mathcal{V}_{clean} = \text{span}\{ \ket{q_0 q_1}\ket{0}_a \}$ be the subspace where the ancilla is in $\ket{0}$.
The basis vectors for this subspace are $\{ \ket{000}, \ket{010}, \ket{100}, \ket{110} \}$.

\paragraph{1. Detailed Trace of a Representative Basis Vector}
Let us track the evolution of the basis vector $\ket{100}$ (where $q_0=1, q_1=0, q_a=0$). This state is non-trivial because the control qubit $q_0$ is active.

\begin{itemize}
    \item \textbf{Input State:} 
    \[ \ket{\psi_0} = \ket{1}_{q_0}\ket{0}_{q_1}\ket{0}_{q_a} \in \mathcal{V}_{clean} \]
    
    \item \textbf{After Gate 1 ($q_0 \to q_a$):}
    Since $q_0=1$, the ancilla flips to $\ket{1}$.
    \[ \ket{\psi_1} = \ket{1}_{q_0}\ket{0}_{q_1}\ket{\mathbf{1}}_{q_a} \]
    \textit{Observation:} The state vector has left the subspace $\mathcal{V}_{clean}$. It is now in the orthogonal subspace $\mathcal{V}_{dirty}$ (where $q_a=1$).
    
    \item \textbf{After Gate 2 ($q_a \to q_1$):}
    Since $q_a=1$ (dirty), it activates the second gate, flipping $q_1$.
    \[ \ket{\psi_2} = \ket{1}_{q_0}\ket{\mathbf{1}}_{q_1}\ket{1}_{q_a} \]
    \textit{Observation:} The computation is performed correctly, but the state remains outside $\mathcal{V}_{clean}$.
    
    \item \textbf{After Gate 3 ($q_0 \to q_a$):}
    Since $q_0=1$ still holds, the ancilla flips again ($1 \oplus 1 = 0$).
    \[ \ket{\psi_{final}} = \ket{1}_{q_0}\ket{1}_{q_1}\ket{\mathbf{0}}_{q_a} \in \mathcal{V}_{clean} \]
    \textit{Conclusion:} The final vector returns to the clean subspace.
\end{itemize}

\paragraph{2. Summary of Full Basis Evolution}
Similarly, we can verify the evolution for the remaining basis vectors of $\mathcal{V}_{clean}$. The results are summarized below:

\[
\begin{aligned}
    \hat{U} \ket{000} &= \ket{000} \in \mathcal{V} \qquad
    \hat{U} \ket{010} = \ket{010} \in \mathcal{V} \\
    \hat{U} \ket{100} &= \ket{110} \in \mathcal{V} \qquad
    \hat{U} \ket{110} = \ket{100} \in \mathcal{V}
\end{aligned}
\]

\paragraph{Geometric Conclusion}
Although the intermediate states (like $\ket{\psi_1}$ and $\ket{\psi_2}$) may rotate out of $\mathcal{V}_{clean}$ into the dirty subspace, the unitary $U$ maps every basis vector of $\mathcal{V}_{clean}$ back into $\mathcal{V}_{clean}$.
Since the operator is linear, this implies:
\[ U(\mathcal{V}_{clean}) = \mathcal{V}_{clean} \]
Thus, the clean ancilla safety is geometrically satisfied.

\subsubsection{Commutativity of Reflector}
Finally, we apply our reduction theorem by verifying the commutativity of the unitary $U$ with the Householder reflection operator $Q_{\mathcal{V}}$.
For the clean subspace determined by $\ket{0}_a$, the reflection operator is derived as:
\[
Q_{\mathcal{V}} = 2 (I_\W \otimes \ket{0}\bra{0}_a) - I_{W \cup A} = I_\W \otimes Z_a.
\]
We verify $[U, Q_{\mathcal{V}}] = 0$ by checking if the operator $Q_{\mathcal{V}}$ remains invariant under the conjugation of each gate in the sequence $U = U_3 U_2 U_1$. We denote the evolving operator as $Q^{(t)}$.

\textbf{Initialization:} $Q^{(0)} = I_{q_0} \otimes I_{q_1} \otimes Z_{q_a}$.

\textbf{Step 1: } $U_1 = \text{CNOT}(q_0, q_a)$.
Using the propagation rule $\text{CNOT}_{c,t} (I_c \otimes Z_t) \text{CNOT}_{c,t} = Z_c \otimes Z_t$:
\[
Q^{(1)} = U_1 Q^{(0)} U_1^\dagger = Z_{q_0} \otimes I_{q_1} \otimes Z_{q_a}.
\]

\textbf{Step 2: } $U_2 = \text{CNOT}(q_a, q_1)$.
Here $q_a$ is the control qubit. Since the control qubit's Z-operator commutes with CNOT ($\text{CNOT}_{c,t} Z_c = Z_c \text{CNOT}_{c,t}$):
\[
Q^{(2)} = U_2 Q^{(1)} U_2^\dagger = Z_{q_0} \otimes I_{q_1} \otimes Z_{q_a}.
\]

\textbf{Step 3: } $U_3 = \text{CNOT}(q_0, q_a)$.
We apply the conjugation to $Z_{q_0} \otimes Z_{q_a}$. Note that $\text{CNOT}_{c,t}$ maps $Z_c \otimes Z_t$ to $I_c \otimes Z_t$:
\[
Q^{(3)} = U_3 Q^{(2)} U_3^\dagger = I_{q_0} \otimes I_{q_1} \otimes Z_{q_a}.
\]

\textbf{Conclusion:}
Since $Q^{(3)} = Q^{(0)} = I_\W \otimes Z_a$, we have $U Q_{\mathcal{V}} U^\dagger = Q_{\mathcal{V}}$, which implies $[U, Q_{\mathcal{V}}] = 0$. By \cref{thm:clean-reduction}, the circuit is clean safe.

\subsection{Motivating Example: Bridge based GHZ Preparation}
\label{subsec:ghz_bridge}

To demonstrate the capabilities of our verification tool, we introduce a parameterized family of GHZ state preparation circuits designed for architectures with limited connectivity. 

\subsubsection*{The Bridge Gate Concept}
In Superconducting Quantum Processors (SQPs), qubit connectivity is typically restricted to a sparse coupling map (e.g., nearest-neighbor). Executing a CNOT between non-adjacent qubits usually requires inserting a chain of SWAP gates, which incurs high overhead (3 CNOTs per SWAP). 
To mitigate this, Itoko et al.~\cite{Itoko2020} proposed the \textit{Bridge Gate} technique. Instead of swapping states, a bridge gate utilizes an intermediate qubit (ancilla) to mediate the interaction. 
As shown in Fig.~\ref{fig:clean_bridge ghz}, a logical CNOT between control $q_c$ and target $q_t$ is realized via an ancilla $a$.

\subsubsection*{Benchmark Design: Cascaded GHZ Generation}
We construct an $n$-qubit GHZ state ($|0\dots0\rangle + |1\dots1\rangle$) by cascading $n-1$ bridge gates. Each step performs a logical CNOT from $q_i$ to $q_{i+1}$ via an ancilla. 
To evaluate robustness and resource efficiency, we define four variants of this benchmark based on two dimensions: \textit{Safety Property} and \textit{Ancilla Allocation}.

\paragraph{1. Safety Property (Clean vs. Dirty).}
This dimension defines the assumption on the initial state of the ancilla and the gate construction used:
\begin{itemize}
    \item \textbf{Clean Bridge (3-CNOTs):} As Fig~\ref{fig:clean_bridge ghz} shows, this circuit assumes the ancilla starts in the ground state $|0\rangle$. The sequence is $CX(q_c, a) \rightarrow CX(a, q_t) \rightarrow CX(q_c, a)$. 
    \textit{Verification Challenge:} If the ancilla is initialized to $|0\rangle$, it returns to $|0\rangle$ and performs the correct logic. However, if the ancilla is dirty (e.g., $|1\rangle$), the target qubit suffers a bit-flip error. This requires verifying \textit{Clean Safety}.
    
    \item \textbf{Dirty Bridge (4-CNOTs):} As Fig~\ref{fig:dirty_bridge ghz} shows, this circuit makes no assumption on the ancilla's initial state. It appends a fourth gate, $CX(a, q_t)$, to the Clean Bridge sequence. 
    \textit{Verification Challenge:} This construction guarantees that the ancilla is restored and the logical CNOT is correct regardless of the ancilla's initial state (even if entangled). This requires verifying \textit{Dirty Safety}.
\end{itemize}

\begin{figure}[ht]
    \begin{center}
        \begin{quantikz}
        \lstick{$q_0$}  & \gate H    & \ctrl{2} & \qw      & \ctrl{2}    &\qw \\
        \lstick{$q_1$}  & \qw        & \qw      & \targ{}  & \qw         &\qw \\
        \lstick{$q_a$}  & \qw        & \targ{}  & \ctrl{-1}& \targ{}     &\qw
        \end{quantikz}
        \caption{A Bridge CNOT circuit implementing a mediated operation. The ancilla $q_a$ is used to transfer information and is uncomputed at the end to satisfy clean safety.}
        \label{fig:clean_bridge ghz}
    \end{center}
\end{figure}

\begin{figure}[ht]
    \begin{center}
        \begin{quantikz}
        \lstick{$q_0$}  & \gate H    & \ctrl{2} & \qw      & \ctrl{2} & \qw    &\qw \\
        \lstick{$q_1$}  & \qw        & \qw      & \targ{}  & \qw      & \targ{}  &\qw \\
        \lstick{$q_a$}  & \qw        & \targ{}  & \ctrl{-1}& \targ{}  & \ctrl{-1} &\qw
        \end{quantikz}
        \caption{A Bridge CNOT circuit implemented by a dirty ancilla between $q_0$ and $q_1$. The ancilla $q_a$ is used to transfer information and is uncomputed at the end to satisfy dirty safety.}
        \label{fig:dirty_bridge ghz}
    \end{center}
\end{figure}

\paragraph{2. Ancilla Allocation (Distinct vs. Shared).}
This dimension defines how ancilla resources are managed across the cascade:
\begin{itemize}
    \item \textbf{Distinct Ancillae (Linear):} Each logical CNOT uses a fresh, dedicated ancilla. For an $n$-qubit GHZ state, this requires $n-1$ ancilla qubits. This represents a resource-rich scenario where spatial locality is prioritized. The example circuit is shown in Fig~\ref{fig:Double dirty ancilla ghz dirty}.
    
    \item \textbf{Shared Ancilla (Reused):} A single ancilla is reused sequentially for all $n-1$ logical operations. This represents a resource-constrained scenario. 
    \textit{Verification Challenge:} This introduces complex temporal dependencies. A failure to restore the ancilla in step $i$ will propagate errors to all subsequent steps $j > i$, making verification significantly harder. The example circuit is shown is Fig~\ref{fig:Double 1 dirty ancilla ghz dirty}.
\end{itemize}

\begin{figure}[ht]
    \centering
    \begin{quantikz}
\lstick{$q_0$}& \gate H     & \ctrl{3} & \qw      & \ctrl{3} & \qw      & \qw      & \qw      & \qw      & \qw      &  \qw\\
\lstick{$q_1$}& \qw      & \qw      & \targ{}  & \qw      & \targ{}  & \ctrl{3} & \qw      & \ctrl{3} & \qw      & \qw\\
\lstick{$q_2$}& \qw      & \qw      & \qw      & \qw      & \qw      & \qw      & \targ{}  & \qw      & \targ{}  & \qw\\  
\lstick{$a_0$}& \qw      & \targ{}  & \ctrl{-2}& \targ{}  & \ctrl{-2}& \qw      & \qw      & \qw      & \qw      &  \qw\\
\lstick{$a_1$}& \qw      & \qw      & \qw      & \qw      & \qw      & \targ{}  & \ctrl{-2}& \targ{}  & \ctrl{-2}  & \qw
\end{quantikz}
    \caption{A cascaded GHZ quantum circuit design utilizing two dirty ancilla qubits ($a_0, a_1$) to mediate interactions between data qubits.}
    \label{fig:Double dirty ancilla ghz dirty}
\end{figure}

\begin{figure}
    \centering
    \begin{quantikz}
\lstick{$q_0$}& \gate H    & \ctrl{3} & \qw      & \ctrl{3} & \qw      & \qw      & \qw      & \qw         & \qw         &  \qw\\
\lstick{$q_1$}& \qw        & \qw      & \targ{}  & \qw      & \targ{}  & \ctrl{2}    & \qw      & \ctrl{2} & \qw      & \qw\\
\lstick{$q_2$}& \qw        & \qw      & \qw      & \qw      & \qw      & \qw          & \targ{}  & \qw      & \targ{}  & \qw\\      
\lstick{$a_0$}& \qw        & \targ{}  & \ctrl{-2}& \targ{}  & \ctrl{-2}& \targ{}      & \ctrl{-1}         & \targ{}      & \ctrl{-1}      &  \qw
\end{quantikz}
    \caption{A cascaded GHZ quantum circuit design utilizing one dirty ancilla qubit $a_0$ to mediate interactions between data qubits.}
    \label{fig:Double 1 dirty ancilla ghz dirty}
\end{figure}

These variants provide a comprehensive suite for benchmarking quantum verification tools, ranging from simple clean checks to complex, interference-prone dirty safety verification. 

\section{Proof}\label{app:proof}

\begin{proof}[Proof of \cref{lem:clean-subspace}]
Let $\mathcal{V} = \mathcal{H}_\W \otimes \mathrm{span}\{\ket{0}_\A\}$.

($\Rightarrow$)
Assume $U$ is clean-safe on $\A$.
By definition, for any $\ket{\psi}_\W \in \mathcal{H}_\W$,
\[
U(\ket{\psi}_\W \otimes \ket{0}_\A) = \ket{\psi'}_\W \otimes \ket{0}_\A \in \mathcal{V}.
\]
By linearity, for any $\ket{v} \in \mathcal{V}$, $U\ket{v} \in \mathcal{V}$. Thus,
\[
U(\mathcal{V}) \subseteq \mathcal{V}.
\]
Since $U$ is unitary, it is an isometry and preserves subspace dimensions:
\[
\dim(U(\mathcal{V})) = \dim(\mathcal{V}).
\]
For finite-dimensional subspaces, inclusion ($U(\mathcal{V}) \subseteq \mathcal{V}$) and dimension equality imply equality:
\[
U(\mathcal{V}) = \mathcal{V}.
\]

($\Leftarrow$)
Assume $U(\mathcal{V}) = \mathcal{V}$.
For any $\ket{\psi}_\W \in \mathcal{H}_\W$, let $\ket{\phi} = \ket{\psi}_\W \otimes \ket{0}_\A$. Clearly $\ket{\phi} \in \mathcal{V}$.
The assumption implies $U\ket{\phi} \in \mathcal{V}$.
By the definition of $\mathcal{V}$, any vector in $\mathcal{V}$ takes the form $\ket{\xi}_\W \otimes \ket{0}_\A$. Hence, there exists $\ket{\psi'}_\W \in \mathcal{H}_\W$ such that
\[
U(\ket{\psi}_\W \otimes \ket{0}_\A) = \ket{\psi'}_\W \otimes \ket{0}_\A,
\]
which satisfies the definition of clean safety.
\end{proof}

\begin{proof}[Proof of \cref{lemma:clean-commute}]
Recall that $\mathcal{V} = \mathcal{H}_\W \otimes \mathrm{span}\{\ket{0}_\A\}$. Let $P_{\mathcal{V}} = I_\W \otimes \ket{0}\!\bra{0}_\A$ be the orthogonal projector onto $\mathcal{V}$.
The operator $Q_{\mathcal{V}}$ is defined as:
\[
Q_{\mathcal{V}} = 2 P_{\mathcal{V}} - I.
\]
Since the identity operator $I$ commutes with any operator, the commutator relation simplifies to:
\[
[U, Q_{\mathcal{V}}] = [U, 2P_{\mathcal{V}} - I] = 2[U, P_{\mathcal{V}}].
\]
Thus, $[U, Q_{\mathcal{V}}] = 0 \iff [U, P_{\mathcal{V}}] = 0$. The proof reduces to showing $U(\mathcal{V})=\mathcal{V} \iff [U, P_{\mathcal{V}}] = 0$.

($\Rightarrow$)
Assume $U(\mathcal{V}) = \mathcal{V}$.
Consider the operator $R = U P_{\mathcal{V}} U^\dagger$. Since $U$ is unitary, $R$ is an orthogonal projector.
For any $\ket{u} \in \mathcal{H}$, let $\ket{v} = P_{\mathcal{V}} (U^\dagger \ket{u})$. By definition, $\ket{v} \in \mathcal{V}$. Then,
\[
R \ket{u} = U \ket{v} \in U(\mathcal{V}) = \mathcal{V}.
\]
Thus $\mathrm{Im}(R) = \mathcal{V}$. Since the orthogonal projector onto a subspace is unique,
\[
U P_{\mathcal{V}} U^\dagger = P_{\mathcal{V}}.
\]
Right-multiplying by $U$ yields $U P_{\mathcal{V}} = P_{\mathcal{V}} U$, i.e., $[U, P_{\mathcal{V}}] = 0$.

($\Leftarrow$)
Assume $[U, P_{\mathcal{V}}] = 0$.
For any $\ket{v} \in \mathcal{V}$, we have $P_{\mathcal{V}}\ket{v} = \ket{v}$.
Applying the projection to the transformed vector $U\ket{v}$:
\[
P_{\mathcal{V}} (U \ket{v}) = U (P_{\mathcal{V}} \ket{v}) = U \ket{v}.
\]
Since $P_{\mathcal{V}}$ acts as the identity only on $\mathcal{V}$, this implies $U\ket{v} \in \mathcal{V}$.
Thus, $U(\mathcal{V}) \subseteq \mathcal{V}$.
Since $U$ is an isometry on a finite-dimensional space, inclusion implies equality:
\[
U(\mathcal{V}) = \mathcal{V}. \qedhere
\]
\end{proof}

\begin{proof}[Proof of \cref{lem:dirty-multi-single}]
($\Rightarrow$)
Assume $U$ is dirty safe with respect to the full ancilla register $\A = \{a_1, \dots, a_m\}$.
By definition, there exists a unitary $V$ on $\mathcal{H}_\W$ such that $U = V \otimes I_\A$, which can be expanded as:
\[
U = V \otimes I_{a_1} \otimes \cdots \otimes I_{a_m}.
\]
Then, for any $j \in \{1, \dots, m\}$, any state $\ket{\varphi}_{a_j} \in \mathcal{H}_{a_j}$, and any state $\ket{\psi}_{\W \cup (\A \setminus \{a_j\})} \in \mathcal{H}_{\W \cup (\A \setminus \{a_j\})}$, we have
\[
U (\ket{\psi}_{\W \cup (\A \setminus \{a_j\})} \otimes \ket{\varphi}_{a_j})
= (V \otimes I_{\A \setminus \{a_j\}}) \ket{\psi}_{\W \cup (\A \setminus \{a_j\})} \otimes \ket{\varphi}_{a_j},
\]
which shows $U$ is dirty safe with respect to $a_j$ alone.

($\Leftarrow$)
Assume for every $a_j \in \A$, $U$ is dirty safe on $a_j$. That is,
\[
\forall j \in \{1, \dots, m\},\ \exists V_j \quad \text{on} \quad \mathcal{H}_{\W\cup (\A\setminus \{a_j\})},
\]
such that
\[
U = V_j \otimes I_{a_j}.
\]
Since this holds for all $j = 1, \dots, m$, $U$ acts as the identity on every ancilla qubit independently.
Hence, $U$ acts as the identity on the entire ancilla register $\A$:
\[
U = V \otimes I_\A
\]
for some unitary $V$ on $\mathcal{H}_\W$.
\end{proof}

\begin{proof}[Proof of \cref{thm:clean-reduction}]
This result follows directly from the chain of equivalences established in Lemma~\ref{lem:clean-subspace} and Lemma~\ref{lemma:clean-commute}.

First, by Lemma~\ref{lem:clean-subspace}, clean safety is equivalent to the invariance of the clean subspace $\mathcal{V}$:
\[
U \text{ is clean-safe on } \A \iff U(\mathcal{V}) = \mathcal{V}.
\]
Second, by Lemma~\ref{lemma:clean-commute}, the invariance of subspace $\mathcal{V}$ is equivalent to the commutation of $U$ with the operator $Q_{\mathcal{V}}$:
\[
U(\mathcal{V}) = \mathcal{V} \iff [U, Q_{\mathcal{V}}] = 0.
\]
Combining these two equivalences yields:
\[
U \text{ is clean-safe on } \A \iff [U, Q_{\mathcal{V}}] = 0. \qedhere
\]
\end{proof}

\begin{proof}[Proof of \cref{thm:clean-reduction-gen}]

Let $\mathcal{V}_{\varphi} = \mathcal{H}_\W \otimes \mathrm{span}\{\ket{\varphi}_\A\}$ and $P_{\mathcal{V}_{\varphi}} = I_\W \otimes \ket{\varphi}\!\bra{\varphi}_\A$. The proof proceeds by establishing a chain of equivalences.

\noindent\textbf{1. Clean safety $\iff$ Subspace invariance}
\begin{itemize}
    \item By Definition, $U$ is clean-safe for $\ket{\varphi}_\A$ implies that for all $\ket{v} \in \mathcal{V}_{\varphi}$, $U\ket{v} \in \mathcal{V}_{\varphi}$. Thus, $U(\mathcal{V}_{\varphi}) \subseteq \mathcal{V}_{\varphi}$.
    \item Since $U$ is unitary (isometry) and $\dim(\mathcal{V}_{\varphi}) < \infty$:
    \[
    U(\mathcal{V}_{\varphi}) \subseteq \mathcal{V}_{\varphi} \implies U(\mathcal{V}_{\varphi}) = \mathcal{V}_{\varphi}.
    \]
\end{itemize}

\noindent\textbf{2. Subspace invariance $\iff$ Commutativity}
\begin{itemize}
    \item Assume $U(\mathcal{V}_{\varphi}) = \mathcal{V}_{\varphi}$. Then the projection onto the subspace is invariant under conjugation by $U$:
    \[
    U P_{\mathcal{V}_{\varphi}} U^\dagger = P_{\mathcal{V}_{\varphi}} \iff U P_{\mathcal{V}_{\varphi}} = P_{\mathcal{V}_{\varphi}} U \iff [U, P_{\mathcal{V}_{\varphi}}] = 0.
    \]
    \item Since $Q_{\mathcal{V}_{\varphi}} = 2P_{\mathcal{V}_{\varphi}} - I$, and $[U, I] = 0$:
    \[
    [U, P_{\mathcal{V}_{\varphi}}] = 0 \iff [U, 2P_{\mathcal{V}_{\varphi}} - I] = 0 \iff [U, Q_{\mathcal{V}_{\varphi}}] = 0.
    \]
\end{itemize}
Combining \textbf{1} and \textbf{2}, we conclude:
\[
U \text{ is clean-safe for } \ket{\varphi}_\A \iff [U, Q_{\mathcal{V}_{\varphi}}] = 0. \qedhere
\]
\end{proof}

\begin{proof}[Proof of \cref{thm:dirty-reduction}]
($\Rightarrow$)
Assume $U$ is dirty-safe on $\A$.
By Lemma~\ref{lem:dirty-multi-single}, $U$ acts as the identity on the ancilla register $\A$ (up to a unitary on $\W$). That is, $U = V_\W \otimes I_\A$.
Since the identity operator $I_\A$ commutes with any operator on $\mathcal{H}_\A$, $U$ commutes with $Q_{(a_i, \varphi)}$ for any $a_i$ and any $\varphi$.

($\Leftarrow$)
Assume the commutativity condition holds:
\[
\forall\,a_i\in \A,\ \forall\,\ket{\varphi}\in\{\ket{\varphi_1},\ket{\varphi_2}\}:\quad [U,\,Q_{(a_i,\ket\varphi)}]=0.
\]
By Lemma~\ref{lem:dirty-multi-single}, it suffices to prove that $U$ acts as the identity on each individual ancilla qubit $a_i$ (i.e., $U$ is dirty-safe on each $a_i$).

Fix an arbitrary ancilla qubit $a_i$.
By Theorem~\ref{thm:clean-reduction-gen}, the commutativity conditions imply that $U$ is clean-safe for both $\ket{\varphi_1}_{a_i}$ and $\ket{\varphi_2}_{a_i}$.
Let us decompose the action of $U$ relative to the basis aligned with $\ket{\varphi_1}$. Without loss of generality, identify $\ket{\varphi_1}$ with the basis state $\ket{0}$ and let $\ket{0}^\perp = \ket{1}$.
Since $U$ is clean-safe for $\ket{0}$, it must be block-diagonal with respect to this basis:
\[
U = V_0 \otimes \ket{0}\!\bra{0}_{a_i} + V_1 \otimes \ket{1}\!\bra{1}_{a_i},
\]
where $V_0, V_1$ are unitaries on the remaining qubits.

Now consider the second state $\ket{\varphi_2}$. Since $0 < \left| \braket{\varphi_1}{\varphi_2} \right| < 1$, we can write $\ket{\varphi_2} = \alpha\ket{0} + \beta\ket{1}$ with non-zero coefficients $\alpha, \beta \in \mathbb{C} \setminus \{0\}$.
Applying $U$ to an arbitrary state $\ket{\psi}$ on the working register tensored with $\ket{\varphi_2}$:
\[
U(\ket{\psi}\otimes\ket{\varphi_2}) = \alpha (V_0\ket{\psi})\otimes\ket{0} + \beta (V_1\ket{\psi})\otimes\ket{1}.
\]
Since $U$ is also clean-safe for $\ket{\varphi_2}$, the output must separate as $\ket{\psi'}\otimes\ket{\varphi_2} = \ket{\psi'}\otimes(\alpha\ket{0} + \beta\ket{1})$.
Matching the coefficients for $\ket{0}$ and $\ket{1}$ implies:
\[
V_0 \ket{\psi} = \ket{\psi'} \quad \text{and} \quad V_1 \ket{\psi} = \ket{\psi'}.
\]
Thus $V_0 = V_1$.
Consequently, $U = V_0 \otimes (\ket{0}\!\bra{0} + \ket{1}\!\bra{1}) = V_0 \otimes I_{a_i}$.
This proves that $U$ acts trivially on the ancilla qubit $a_i$. Since this holds for all $a_i \in A$, $U$ is dirty-safe on $A$.
\end{proof}

\section{Implementation Details of Repair Components}
\label{app:implementation}

\subsection{Entanglement Check via Partial Trace}
To efficiently verify this, we reduce the problem to a purity computation defined by a partial trace followed by a trace calculation.

For $|\phi\rangle\in\{|0\rangle,|+\rangle\}$, let
\[
Q'_{(a,|\phi\rangle)}
=
UQ_{(a,|\phi\rangle)}U^\dagger,
\]
and denote all qubits other than $a$ by
$R=W\cup(A\setminus\{a\})$.
A local repair on $a$ is possible only if
\[
Q'_{(a,|\phi\rangle)}
=
I_R\otimes O_{a,\phi}
\]
for some single-qubit operator $O_{a,\phi}$.

To check this condition, we compute the normalized partial trace
\[
O_{a,\phi}
=
\frac{1}{d_R}
\operatorname{Tr}_R
\left(
Q'_{(a,|\phi\rangle)}
\right),
\qquad
d_R=\dim(\mathcal H_R).
\]
Since $Q'_{(a,|\phi\rangle)}$ is a Hermitian unitary, we have
\[
p_{a,\phi}
:=
\frac{1}{2}\operatorname{Tr}(O_{a,\phi}^2)
\le 1,
\]
where equality holds if and only if
$Q'_{(a,|\phi\rangle)}=I_R\otimes O_{a,\phi}$.

Therefore, Algorithm~2 proceeds with local repair only when
\[
p_{a,0}=p_{a,+}=1.
\]
Otherwise, the evolved witness contains non-local components and the
target ancilla is added to \textsc{FailList}.
In practice, the equality is checked up to a numerical tolerance.

\subsection{Angle Calculation}
Recall $Q_{(a,\ket0)} = I_{\W\cup\A/\{a\}}\otimes Z$ and $Q_{(a,\ket+)} = I_{\W\cup\A/\{a\}}\otimes X$. $Q_{(a,\ket0)}' = UQ_{(a,\ket0)}U^\dagger$ and $Q_{(a,\ket+)}' = UQ_{(a,\ket+)}U^\dagger$. 
Since the error is separable, Then there exists $O_{a,0}$ and $O_{a,+}$ that are both unitary, satisfying \[
Q_{(a,\ket0)}' = I_{\W\cup\A/\{a\}}\otimes O_{a,0}
\]
and \[
Q_{(a,\ket+)}' = I_{\W\cup\A/\{a\}}\otimes O_{a,+}.
\]

\subsubsection{Both error occurs}
If \texttt{LogicError} and \texttt{PhaseError} both occur, then we find $\theta_{z1}$, $\theta_{x}$, $\theta_{z2}$
which satisfy:
\[
R_Z(\theta_{z1})R_X(\theta_x)R_Z(\theta_{z2}) \cdot O_{a,0}\cdot R_Z(\theta_{z1})R_X(\theta_x)R_Z(\theta_{z2})^\dagger  = Z
\]
and:
\[
R_Z(\theta_{z1})R_X(\theta_x)R_Z(\theta_{z2}) \cdot O_{a,+}\cdot R_Z(\theta_{z1})R_X(\theta_x)R_Z(\theta_{z2})^\dagger  = X.
\]

\paragraph{Existence and Uniqueness of the Solution.}
The problem of finding the repair angles $(\theta_{z1}, \theta_x, \theta_{z2})$ is equivalent to constructing the inverse unitary $U_{\mathrm{err}}^\dagger$ such that the corrected state aligns with the computational basis.
\textbf{Existence} is guaranteed by the Euler decomposition theorem, which states that any element of the special unitary group $\mathrm{SU}(2)$ can be parameterized by three real rotations $R_Z(\gamma)R_X(\beta)R_Z(\alpha)$. Since the accumulated error $U_{\mathrm{err}}$ is a single-qubit unitary operator, there strictly exists a set of angles that satisfies the alignment constraints for both the $|0\rangle$ and $|+\rangle$ states simultaneously.

Regarding \textbf{Uniqueness}, the standard Euler decomposition is generally unique up to coordinate singularities (Gimbal lock) and periodicities. 
However, we enforce a deterministic canonical solution selected by our convention within the principal interval $(-\pi, \pi]$ through our constructive algorithm.
First, the periodicity ambiguity is resolved by the use of the two-argument arctangent function ($\mathrm{arctan2}$), which maps all inputs deterministically to the principal range.
Second, the potential degeneracy at the poles (where $\theta_x = 0$ or $\pi$, causing $\theta_{z1}$ and $\theta_{z2}$ to become linearly dependent) is resolved by the sequential nature of our derivation. 
Specifically, $\theta_{z1}$ is uniquely determined by the projection of the $Z$-vector onto the $Y$-$Z$ plane; once fixed, $\theta_x$ and $\theta_{z2}$ are subsequently determined by the remaining deviations.
Thus, for any physically valid error unitary, our analytic procedure yields exactly one repair sequence.

\paragraph{\texttt{LogicError} only or \texttt{PhaseError} only}
When \texttt{LogicError} is detected only, then we find $\theta\in (-\pi,\pi]$ satisfying:
\[
R_X(\theta)O_{a,0}R_X(\theta)^\dagger = Z.
\]

When \texttt{PhaseError} is detected only, then we find $\theta\in (-\pi,\pi]$ satisfying:
\[
R_Z(\theta)O_{a,+}R_Z(\theta)^\dagger = X.
\]

The existence and uniqueness of the solution $\theta$ are guaranteed by the algebraic constraints imposed by the diagnosis phase.
For a pure \texttt{LogicError}, the commutativity check $[U_{\mathrm{err}}, X]=0$ restricts the error unitary to the subgroup generated by the Pauli-$X$ operator.
Geometrically, this confines the Bloch vector of the ancilla to the $Y$-$Z$ plane.
Consequently, the optimization problem reduces to finding a unique rotation angle on a unit circle that aligns the vector with the north pole $|0\rangle$, which is analytically solvable via $\theta = \arctan2(r_y, r_z)$.
Similarly, a pure \texttt{PhaseError} implies $[U_{\mathrm{err}}, Z]=0$, constraining the state to the equatorial $X$-$Y$ plane.
The solution is thus the unique azimuthal angle required to restore the $|+\rangle$ state, derived as $\theta = \arctan2(-r_y, r_x)$.
In both cases, the dimensionality reduction from the full Bloch sphere to a single great circle ensures a one-to-one mapping between the error syndrome and the repair angle within the principal branch.

\section{More Experiment Results about Soundness and Scalability of Verification}
\label{app:full_data_unified_final}

\begin{table*}[htbp]
\centering
\caption{\textbf{Comparison of verification efficiency and results.} Time is in seconds. ``--'' indicates the step is skipped due to timeout/memout.}
\label{tab:full_data_unified_final}
\renewcommand{\arraystretch}{0.9} 
\setlength{\tabcolsep}{4pt}

\resizebox{\textwidth}{!}{
\begin{tabular}{@{}l cc rr rr rr@{}}
\toprule
\multirow{2}{*}{\textbf{Benchmark}} & \multirow{2}{*}{\boldmath$N$} & \multirow{2}{*}{\textbf{Depth}} & \multicolumn{2}{c}{\textbf{Matrix}} & \multicolumn{2}{c}{\textbf{Quokka-Sharp}} & \multicolumn{2}{c}{\textbf{QCEC}} \\
\cmidrule(lr){4-5} \cmidrule(lr){6-7} \cmidrule(lr){8-9}
 & & & \textbf{Time} & \textbf{Output} & \textbf{Time} & \textbf{Output} & \textbf{Time} & \textbf{Output} \\
\midrule

\multirow{6}{*}{Grover$_F$}
 & 7   & 55      & 0.1    & \stSafe & 6.4    & \stSafe & 0.2  & \stSafe \\
 & 9   & 99      & 1.0    & \stSafe & Timeout & \dash   & 0.3  & \stSafe \\
 & 11  & 195     & 38.3   & \stSafe & Timeout & \dash   & 0.3  & \stSafe \\
 & 19  & 1603    & Memout & \dash   & Timeout & \dash   & 0.4  & \stSafe \\
 & 29  & 14771   & Memout & \dash   & Timeout & \dash   & 1.9  & \stSafe \\
 & 35  & 51459   & Memout & \dash   & Timeout & \dash   & 8.0  & \stSafe \\
 & 39  & 115779   & Memout & \dash   & Timeout & \dash   & 25.9  & \stSafe \\
\midrule

\multirow{8}{*}{Grover$_{r1}$}
 & 99  & 387  & Memout & \dash & 7.0   & \stSafe & 1.0  & \stSafe \\
 & 179 & 707  & Memout & \dash & 21.8  & \stSafe & 2.0  & \stSafe \\
 & 239 & 947  & Memout & \dash & 41.9  & \stSafe & 3.1  & \stSafe \\
 & 319 & 1267 & Memout & \dash & 68.3  & \stSafe & 5.5  & \stSafe \\
 & 399 & 1587 & Memout & \dash & 112.2 & \stSafe & 6.6  & \stSafe \\
 & 479 & 1907 & Memout & \dash & 162.4 & \stSafe & 8.6  & \stSafe \\
 & 559 & 2227 & Memout & \dash & 223.8 & \stSafe & 15.6 & \stSafe \\
 & 699 & 2787 & Memout & \dash & 370.0 & \stSafe & 19.5 & \stSafe \\
\midrule

\multirow{11}{*}{MCX}
 & 199  & 392    & Memout & \dash & 0.6   & \stSafe & 2.2    & \stSafe \\
 & 599  & 1192   & Memout & \dash & 3.7   & \stSafe & 8.7    & \stSafe \\
 & 999  & 1992   & Memout & \dash & 10.0  & \stSafe & 18.7   & \stSafe \\ 
 & 1399 & 2792   & Memout & \dash & 19.2  & \stSafe & 35.3   & \stSafe \\
 & 1599 & 3192   & Memout & \dash & 28.0  & \stSafe & 50.2   & \stSafe \\ 
 & 1799 & 3592   & Memout & \dash & 31.8  & \stSafe & 57.7   & \stSafe \\ 
 & 1999 & 3992   & Memout & \dash & 38.9  & \stSafe & 63.1   & \stSafe \\
 & 3999 & 7992   & Memout & \dash & 174.0 & \stSafe & 514.4  & \stSafe \\
 & 5999 & 11992  & Memout & \dash & 414.4 & \stSafe & Memout  & \dash   \\
 & 7999 & 15992  & Memout & \dash & 634.4 & \stSafe & Memout  & \dash   \\
 & 9999 & 19992  & Memout & \dash & 1115.6& \stSafe & Memout  & \dash   \\
\midrule

\multirow{9}{*}{Adder}
 & 13   & 44    & 160.0  & \stSafe & 0.1   & \stSafe & 0.6    & \stSafe \\
 & 599  & 2388  & Memout & \dash   & 3.3   & \stSafe & 13.3   & \stSafe \\
 & 999  & 3988  & Memout & \dash   & 8.8   & \stSafe & 38.9   & \stSafe \\
 & 1499 & 5988  & Memout & \dash   & 20.2  & \stSafe & 81.1   & \stSafe \\
 & 1999 & 7988  & Memout & \dash   & 33.1  & \stSafe & 139.3  & \stSafe \\
 & 2999 & 11988 & Memout & \dash   & 71.1  & \stSafe & 444.8  & \stSafe \\
 & 3999 & 15988 & Memout & \dash   & 141.9 & \stSafe & 1025.3 & \stSafe \\
 & 4999 & 19988 & Memout & \dash   & 218.9 & \stSafe & Memout & \dash   \\
 & 5999 & 23988 & Memout & \dash   & 317.8 & \stSafe & Memout & \dash   \\
\midrule

\multirow{7}{*}{Bridge GHZ}
 & 1399 & 2797 & Memout & \dash & 2.4  & \stSafe & 27.3   & \stSafe \\
 & 1799 & 3597 & Memout & \dash & 4.5  & \stSafe & 40.8   & \stSafe \\
 & 1999 & 3997 & Memout & \dash & 5.8  & \stSafe & 67.1   & \stSafe \\
 & 2399 & 4797 & Memout & \dash & 7.6  & \stSafe & 74.9   & \stSafe \\
 & 3599 & 7197 & Memout & \dash & 8.0  & \stSafe & 234.2  & \stSafe \\
 & 4399 & 8797 & Memout & \dash & 10.4 & \stSafe & Memout & \dash   \\
 & 4799 & 9597 & Memout & \dash & 11.0 & \stSafe & Memout & \dash   \\
\midrule

\multirow{3}{*}{\shortstack[l]{Reqomp-\\MCX}}
 & 11 & 9  & 2.8    & \stPhase & 0.1 & \stPhase & 0.3 & \stPhase \\
 & 13 & 11 & 92.4   & \stPhase & 0.1 & \stPhase & 0.3 & \stPhase \\
 & 15 & 13 & Memout & \dash    & 0.1 & \stPhase & 0.3 & \stPhase \\
\midrule

\multirow{3}{*}{\shortstack[l]{Identity\\Random}}
 & 5 & 47 & 0.1 & \stSafe & 0.4 & \stSafe & 0.2 & \stSafe \\
 & 6 & 53 & 0.1 & \stSafe & 1.4 & \stSafe & 0.2 & \stSafe \\
 & 7 & 54 & 0.2 & \stSafe & 5.1 & \stSafe & 0.2 & \stSafe \\
\midrule

\multirow{3}{*}{\shortstack[l]{Random\\Circuit}}
 & 30 & 11 & Memout & \dash  & 0.1 & \stHybrid & 0.3 & \stHybrid \\
 & 50 & 14 & Memout & \dash  & 0.1 & \stHybrid & 0.4 & \stHybrid \\
 & 70 &  4 & Memout & \dash  & 0.1 & \stHybrid & 0.5 & \stHybrid \\
\bottomrule
\end{tabular}
}
\end{table*}
\cref{tab:full_data_unified_final} shows more detail comparing performance of different backends.